\theoremstyle{definition}
\newtheorem{axiom}{Axiom}
\newtheorem{theorem}{Theorem}
\newcommand{\stateset}{\mbox{$\mathscr{S}$}}
\newcommand{\atomicset}{\mbox{$\mathscr{A}$}}
\newcommand{\eidoset}{\mbox{$\mathscr{E}$}}
\newcommand{\recordset}{\mbox{$\mathscr{R}$}}
\newcommand{\infoset}{\mbox{$\mathscr{I}$}}
\newcommand{\mechset}{\mbox{$\mathscr{M}$}}
\newcommand{\procset}{\mbox{$\mathcal{P}$}}
\newcommand{\uniformset}{\mbox{$\mathcal{U}$}}
\newcommand{\absolute}[1]{\left | #1 \right |}
\newcommand{\numberin}[1]{\#(#1)}
\newcommand{\zeroprocess}{\mbox{\bf 0}}
\newcommand{\fproc}[2]{\left \langle #1,#2 \right \rangle}
\newcommand{\eqclass}[1]{\left [ #1 \right ]}
\newcommand{\irrev}{\mathbb{I}}
\newcommand{\entropy}{\mathbb{S}}
\newcommand{\bitstate}{I_{\mathrm{b}}}
\newcommand{\bitproc}{\Theta_{\mathrm{b}}}
\newcommand{\hilbert}{\mathcal{H}}
\newcommand{\ket}[1]{\left | #1 \right \rangle}
\newcommand{\bra}[1]{\left \langle #1 \right |}
\newcommand{\proj}[1]{\ket{#1} \!\! \bra{#1}}
\newcommand{\amp}[2]{\left \langle #1 | #2 \right \rangle}
\newcommand{\oper}[1]{\boldsymbol{#1}}
\newcommand{\tr}{\mbox{Tr}\,}
\newcommand{\leftorright}{\rightleftharpoons}
\title{Axiomatic information thermodynamics}
\date{\today}
\author{Austin Hulse \\ 
	Benjamin Schumacher\thanks{Corresponding author:  Department of Physics,
	Kenyon College, Gambier, OH 43022 USA. E-mail schumacherb@kenyon.edu} 
	\\ Department of Physics, Kenyon College\\
	\and Michael D. Westmoreland\\Department of Mathematics, Denison University}
\begin{document}

\maketitle

\begin{abstract}
	We present an axiomatic framework for thermodynamics that incorporates
	information as a fundamental concept.  The axioms describe both ordinary 
	thermodynamic processes and those in which information is acquired, used 
	and erased, as in the operation of Maxwell's demon.
	This system, similar to previous axiomatic systems for thermodynamics, supports
	the construction of conserved quantities and an entropy function
	governing state changes.  Here, however, the entropy exhibits 
	both information and thermodynamic aspects.
	Although our axioms are not based upon probabilistic concepts,
	a natural and highly useful concept of probability 
	emerges from the entropy function itself.  
	Our abstract system has many models, including both classical 
	and quantum examples.
\end{abstract}

\section{Introduction}  \label{sec:intro}

Axiomatic approaches to physics are useful for exploring
the conceptual basis and logical structure of a theory.  One classic
example was presented by Robin Giles over fifty years ago in his 
monograph \emph{Mathematical Foundations of Thermodynamics}
\cite{Giles1964}.
His theory is constructed upon three phenomenological concepts:  
thermodynamic states, an operation ($+$) that combines states 
into composites, and a relation~($\rightarrow$) describing possible
state transformations.  From a small number of basic
axioms, Giles derives a remarkable amount of thermodynamic 
machinery, including conserved quantities (``components of content''), 
the existence of an entropy function that characterizes 
irreversibility for possible processes, and so on.

Alternative axiomatic developments of thermodynamics have been
constructed by others along different lines.
One notable example is the framework of Lieb and Yngvason
\cite{LiebYngvason1999,LiebYngvason1998}
(which has recently been used by Thess as the basis for a 
textbook \cite{Thess2011}).  Giles's abstract system, meanwhile, has found application
beyond the realm of classical thermodynamics, e.g., in the theory
of quantum entanglement~\cite{VedralKashefi2002}.

Other work on the foundations of thermodynamics
has focused on the concept of information.  
Much of this has been inspired by Maxwell's famous
thought-experiment of the demon \cite{LeffRex2003}.  The demon is an
entity that can acquire and use information about the
microscopic state of a thermodynamic system, producing
apparent violations of the Second Law of Thermodynamics.
These violations are only ``apparent'' because the demon
is itself a physical system, and its information processes
are also governed by the underlying dynamical laws.

Let us examine a highly simplified example of the demon
at work.  Our thermodynamic system is a one-particle
gas enclosed in a container (a simple system also used 
in \cite{Szilard1929}).  The gas may be allowed to freely
expand into a larger volume, but this process is
irreversible.  A ``free compression'' process that took
the gas from a larger to a smaller volume with no other
net change would decrease the entropy of the system 
and contradict the Second Law.
See Figure~\ref{fig:nofreecompression}.
\begin{figure}
\begin{center}
\includegraphics[width=3.5in]{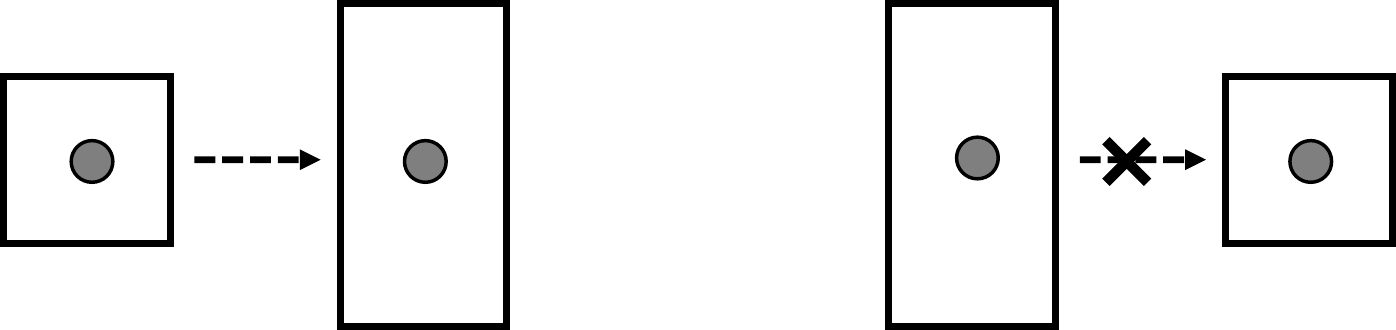}
	\end{center}
	\caption{\label{fig:nofreecompression} A one-particle gas
		may freely expand to occupy a larger volume, but the 
		reverse process would violate the Second Law.}
\end{figure}

Now, we introduce the demon, which is a machine that can
interact with the gas particle and acquire information about
its location.  The demon contains a one-bit
memory register, initially in the state 0.  First, a partition is 
introduced in the container, so that the gas particle is confined
to the upper or lower half ($U$ or $L$).  The demon measures the
gas particle and records its location in memory, with 0 standing 
for $U$ and 1 for $L$.  On the basis of this memory 
bit value, the demon moves
the volume containing the particle.  In the end, the gas 
particle is confined to one particular smaller volume, apparently
reducing the entropy of the system.  
This is illustrated in Figure~\ref{fig:maxwellsdemon}.
\begin{figure}
\begin{center}
\includegraphics[width=4.5in]{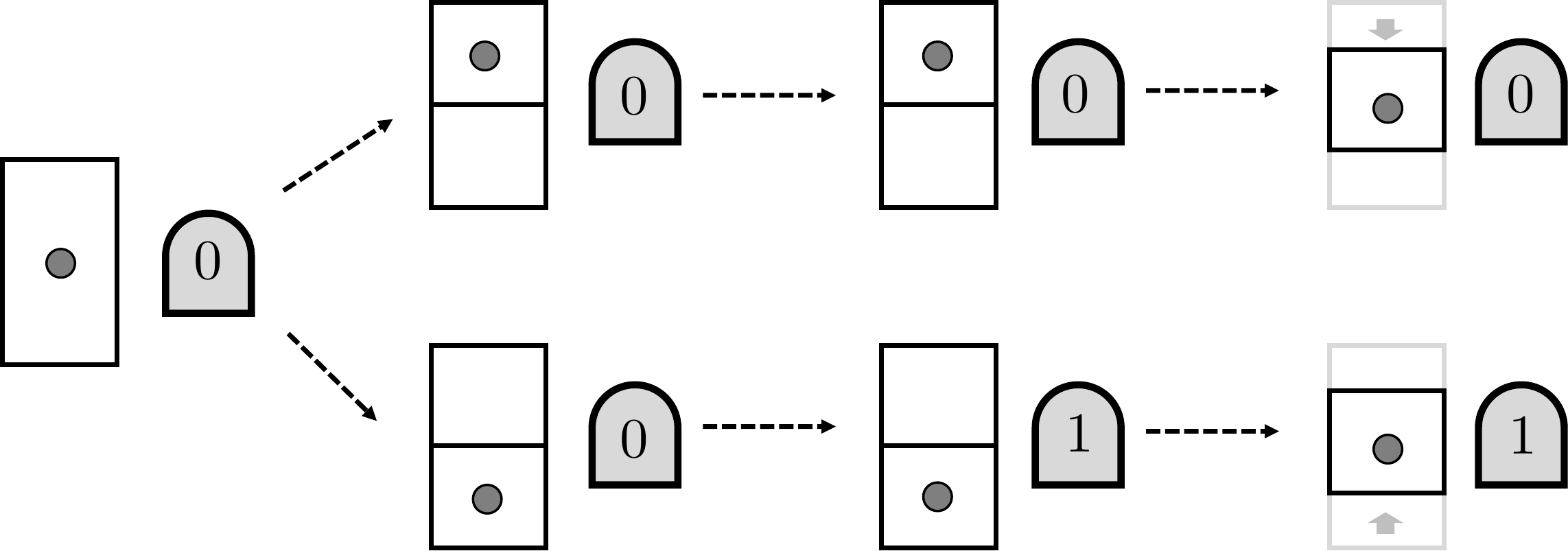}
	\end{center}
	\caption{\label{fig:maxwellsdemon} A Maxwell's demon device
		acquires one bit of information about the location of a gas
		particle, allowing it to contract the gas to a smaller volume.}
\end{figure}

As Bennett pointed out \cite{Bennett1982}, every demon operation we have described 
can be carried out reversibly, with no increase in global entropy.
Even the ``measurement'' process can be described by a reversible
interaction between the particle location and the demon's 
memory bit $b$, changing the states according~to

\begin{equation}
	( \mbox{U}, b ) \dashleftarrow \dashrightarrow ( \mbox{U}, b )
	\qquad
	( \mbox{L}, b ) \dashleftarrow \dashrightarrow ( \mbox{L}, \bar{b} ),
\end{equation}
where $\bar{b}$ is the binary negation of $b$.  However, the
demon process as described leaves the demon in a new
situation, since an
initially blank memory register now stores a bit of information.
To operate in a cycle (and thus unambiguously violate the
Second Law), the demon must erase this bit
and recover the blank memory.  However, as Landauer 
showed \cite{Landauer1961}, the erasure of a bit of information is 
always accompanied by an entropy increase of at 
least $k_{B} \ln 2$ in the surroundings, just enough to
ensure that there is no overall entropy decrease in the
demon's operation.  The Second Law remains valid.

Information erasure is a physical process.
The demon we have described can also erase a stored bit
simply by reversing the steps outlined.  Depending
on the value of the bit register, the demon moves the gas particle
to one of two corresponding regions separated by a partition.
The same interaction that accomplished the ``measurement'' 
process (which takes $(U,0) \dashrightarrow (U,0)$ and $(L,0) \dashrightarrow (L,1)$)
can now be used to reset the bit value to 0
(taking $(U,0) \dashrightarrow (U,0)$ and $(L,1) \dashrightarrow (L,0)$).  
In other words, the information stored redundantly in both the gas and 
the register is ``uncopied'', so that it remains only in the gas.  Finally, the 
partition is removed and the gas expands to the larger volume.
The net effect is to erase the memory register while increasing
the entropy of the gas by an amount $k_{B} \ln 2$, in conventional units.

Another link between thermodynamics and information comes from
statistical mechanics.  As Jaynes has shown \cite{Jaynes1957a,Jaynes1957b}, 
the concepts and measures of information devised by
Shannon for communication theory \cite{Shannon1948} can be used in
the statistical derivation of macroscopic thermodynamic properties.
In a macroscopic system, we typically possess only a small amount
of information about a few large-scale parameters (total
energy, volume, etc.).  According to Jaynes, we should
therefore choose a probability distribution over microstates
that maximizes the Shannon entropy, consistent with our 
data.  That is to say, the rational probability assignment 
includes no information (in the Shannon sense) except that 
found in the macroscopic state of the system.  This 
prescription yields the usual distributions used in
statistical mechanics, from which thermodynamic properties
may be derived.

Axiomatic theories and information analyses each
provide important insights into the meaning of thermodynamics.
The purpose of this paper is to synthesize these
two approaches.  We will present an axiomatic basis for 
thermodynamics that uses information ideas from the very
outset.  In such a theory, Maxwell's demon, which accomplishes
state changes by acquiring information, is neither a paradox nor
a sideshow curiosity.  Instead, it is a central conceptual tool 
for understanding the transformations between thermodynamic states.  
In our view, thermodynamics is essentially a theory of the descriptions
of systems possessed by agents that are themselves (like the
demon) physical systems.  These descriptions may change as
the external systems undergo various processes; however, they also
may change when the agent acquires, uses or discards
information.

Our work is thus similar in spirit to that of Weilenmann et al. \cite{WeilenmannRenner2016},
though our approach is very different.  They essentially take the 
Lieb-Yngvason axiomatic framework and apply it to various quantum
resource theories.  In particular, for the resource theory of
non-uniformity \cite{SpekkensEtc2015}, the Lieb-Yngvason
entropy function coincides with the von Neumann entropy of
the quantum density operator, which is a measure of quantum 
information.  We, on the other hand, seek to modify 
axiomatic thermodynamics itself to describe processes
involving ``information engines'' such as Maxwell's demon.  We
do not rely on any particular microphysics and have
both classical and quantum models for our axioms.  The
connections we find between information and thermodynamic
entropy are thus as general as the axioms themselves.  
Furthermore, to make our concept of information as clear as possible, 
we will seek to base our development on the most elementary 
ideas of state and process.

We therefore take as our prototype the axiomatic theory of Giles \cite{Giles1964}.
In fact, Giles's monograph contains two different axiomatic 
developments. (The first system is presented in 
Chapters 1--6 of Giles's book, and the second is introduced
beginning in Chapter 7.)
The first, which we might designate Giles~I, 
is based on straightforward postulates about the properties
of states and processes.  The second, Giles II, is more sophisticated
and powerful.  The axioms are less transparent in meaning (e.g., the
assumed existence of ``anti-equilibrium'' and ``internal'' 
thermodynamic states), but they support stronger theorems.  
This difference can be illustrated by the status of entropy
functions in the two developments.  In Giles I, it is shown
that an entropy function exists; in fact, there may be many such functions.
In Giles II, it is shown that an \emph{absolute} entropy function,
one that is zero for every anti-equilibrium state, exists and
is unique.
Giles himself regarded the second system as ``The 
Formal Theory'', which he summarizes in an Appendix of
that title. 

The information-based system we present here,
on the other hand, is closer to the more elementary
framework of Giles I.
We have taken some care to use notation and 
concepts that are as analogous as possible to that theory.  
We derive many of Giles's propositions, 
including some that he uses as axioms.  
Despite the similarities, however,
even the most elementary ideas (such as the combination of 
states represented by the $+$ operation) will require 
some modifications.
These changes highlight the new ideas embodied
in axiomatic information thermodynamics, and so
we will take some care to discuss them as they arise.

Section~\ref{sec:eidostates} introduces the fundamental
idea of an eidostate, including how two or more eidostates
may be combined.  In Section~\ref{sec:processes}, we
introduce the $\rightarrow$ relation between eidostates:
$A \rightarrow B$ means that eidostate $A$ can be 
transformed into eidostate $B$, with no other net change
in the apparatus that accomplishes the transformation.  
The collection of processes has an algebraic structure, which we present
in Section~\ref{sec:algebra}.  We introduce our concept
of information in Section~\ref{sec:information} and show
that the structure of information processes imposes a
unique entropy measure on pure information states.

Section~\ref{sec:demons} introduces axioms inspired by
Maxwell's demon and shows their implications for 
processes involving thermodynamic states. 
Sections~\ref{sec:irreversibility} and \ref{sec:mechanical}
outline how our information axioms also yield
Giles's central results about thermodynamic processes,
including the existence of an entropy function, conserved
components of content, and mechanical states.
In Section~\ref{sec:uniformentropy}, we see that an 
entropy function can be extended in a unique way to
a wider class of ``uniform'' eidostates.  This, rather
remarkably, gives rise to a unique probability measure 
within uniform eidostates, as we describe
in Section~\ref{sec:probability}.

Sections~\ref{sec:model1} and \ref{sec:model2} present two detailed models
of our axioms, each one highlighting a different aspect of
the theory.  We conclude in Section~\ref{sec:remarks}
with some remarks on the connection between information
and thermodynamic entropy, a connection that emerges
necessarily from the structure of processes in our theory.

\section{Eidostates}  \label{sec:eidostates}

Giles \cite{Giles1964}
bases his development on a set $\stateset$ of \emph{states}.  
The term ``state'' is undefined in the formal theory, but heuristically 
it represents an equilibrium macrostate of some thermodynamic system.
(Giles, in fact, identifies $a \in \stateset$ with some method of 
preparation; the concept of a ``system'' does not appear in his theory, or in ours.)
States can be combined using the $+$ operation, so that if $a,b \in \stateset$
then $a+b \in \stateset$ also.  The new state $a+b$ is understood as 
the state of affairs arising from simultaneous and
independent preparations of states $a$ and $b$.  For Giles, this
operation is commutative and associative; for example, $a+b$
is exactly the same state as $b+a$.

We also have a set $\stateset$.  However, our theory differs in
two major respects.  First, the combination of states is not 
assumed to be commutative and associative.  On the contrary,
we regard $a+b$ and $b+a$ as entirely distinct states for any $a \neq b$.
The motivation for this is that the way that a composite state is
assembled encodes information about its preparation, and
we want to be able to keep track of this information.  On the
states in $\stateset$, the operation $+$ is simply that of forming
a Cartesian pair of states:  $a+b = (a,b)$, and nothing more.

A second and more far-reaching difference is that we cannot
confine our theory to individual elements of $\stateset$.  A 
theory that keeps track of information in a physical system
must admit non-deterministic processes.  For instance, if
a measurement is made, the single state $a$ before 
the measurement may result in any one of a collection of 
possible states $\{ a_0, a_1, \ldots , a_n \}$ corresponding to the 
various results.

Therefore, our basic notion is the \emph{eidostate},
which is a finite nonempty set of states. The
term ``eidostate'' derives from the Greek word
\emph{eidos}, meaning ``to see''.  The eidostate
is a collection of states that may be regarded as
possible from the point of view of some agent. The set of eidostates 
is designated by $\eidoset$.  When we combine
two eidostates $A,B \in \eidoset$ into the 
composite $A + B$, we mean the set of all
combinations $a+b$ of elements of these sets.
That is, $A + B$ is exactly the Cartesian product
(commonly denoted $A \times B$) 
for the sets. Thus, our
state combination operation $+$ is very different
from that envisioned by Giles.  His operation 
is an additional structure on $\stateset$ that
requires an axiom to fix its properties,
whereas we simply use the Cartesian product
provided by standard set theory, which we of course
assume.

Some eidostates in $\eidoset$ are Cartesian products 
of other sets (which are also eidostates); some eidostates
are not, and are thus ``prime''.  We assume that each eidostate
has a ``prime factorization'' into a finite number of components.
Here, is our first axiom:
\begin{axiom}Eidostates:  \label{axiom:eidostates}
	$\eidoset$ is a collection of sets called \emph{eidostates} such that:
	\begin{description}
		\item(a)  Every $A \in \eidoset$ is a finite nonempty set
			with a finite prime Cartesian factorization.
		\item(b)  $A + B \in \eidoset$ if and only if $A,B \in \eidoset$.
		\item(c)  Every nonempty subset of an eidostate is also an eidostate.
	\end{description}
\end{axiom}
Part (c) of this axiom ensures, among other things, that every element
$a$ of an eidostate can be associated with a singleton eidostate $\{ a \}$.
Without too much confusion, we can simply denote any singleton eidostate
$\{ a \}$ by the element it contains, writing $a$ instead.  We can therefore regard
the set of states $\stateset$ in two ways.  Either we may think of $\stateset$
as the collection of singleton eidostates in $\eidoset$, or we may say that
$\stateset$ is the collection of all elements of all eidostates:  
$\displaystyle \stateset = \bigcup_{A \in \eidoset} A$.  Either way, the
set $\eidoset$ characterized by Axiom~\ref{axiom:eidostates} is the
more fundamental object.

Our ``eidostates'' are very similar to the ``specifications''
introduced by del Rio et al. \cite{delRioKraemerRenner} in their
general framework for resource theories of knowledge.  The
two ideas, however, are not quite identical.  
To begin with, a specification $V$ may be any
subset of a state space $\Omega$, whereas the eidostates in $\eidoset$
are required to be finite nonempty subsets of $\stateset$---and indeed, not
every such subset need be an eidostate.  For example, the union
$A \cup B$ of two eidostates is not necessarily an eidostate.
The set $\eidoset$ therefore does not form a Boolean lattice.  
Specifications are a general concept applicable
to state spaces of many different kinds, and are used in 
\cite{delRioKraemerRenner} to analyze different resource
theories and to express general notions of approximation and locality.
We, however, will be restricting our attention to an $\eidoset$ (and $\stateset$)
with very particular properties, expressed by Axiom~\ref{axiom:eidostates}
and our later axioms, that are designed to model thermodynamics 
in the presence of information engines such as Maxwell's demon.

Composite eidostates are formed by combining eidostates with the
$+$ operation (Cartesian product).  The same pieces may be combined
in different ways.  We say that $A,B \in \eidoset$ are \emph{similar}
(written $A \sim B$) if they are made up of the same components.
That is, $A \sim B$ provided there are eidostates $E_{1}, \ldots , E_{n}$
such that $A = F_{A}(E_{1}, \ldots , E_{n})$ and $B = F_{B}(E_{1}, \ldots , E_{n})$
for two Cartesian product formulas $F_{A}$ and $F_{B}$.  Thus, 

\begin{equation}
	(E_{1} + E_{2}) + E_{3} \sim E_{2} + (E_{1} + E_{3}) ,
\end{equation}
and so on.  The similarity relation $\sim$ is an equivalence relation 
on $\eidoset$.

We sometimes wish to combine an eidostate with itself several
times.  For the integer $n \geq 1$, we denote by $n A$ 
the eidostate $A + (A + \ldots)$, where $A$ appears $n$ times
in the nested Cartesian product.  This is one particular way to 
combine the $n$ instances of $A$, though of course all
such ways are similar.  Thus, we may assert equality in
$A + nA = (n+1) A$, but only similarity in $nA + A \sim (n+1) A$.

Finally, we note that we have introduced as yet no probabilistic
ideas.  An eidostate is a simple enumeration of possible states, 
without any indication that some are more or less likely than others.
However, as we will see in Section~\ref{sec:probability}, 
in certain contexts, a natural probability measure 
for states does emerge from our axioms.

\section{Processes}  \label{sec:processes}

In the axiomatic thermodynamics of Giles, the $\rightarrow$ relation
describes state transformations.  The relation $a \rightarrow b$
means that there exists another state $z$ and a definite time
interval $\tau \geq 0$ so that

\begin{equation}
	a + z \stackrel{\tau}{\rightsquigarrow} b + z ,
\end{equation}
where $\stackrel{\tau}{\rightsquigarrow}$ indicates time evolution
over the period $\tau$.  The pair $(z,\tau)$ is the ``apparatus''
that accomplishes the transformation from $a$ to $b$.  
This dynamical evolution is a deterministic process; 
that is, in the presence of the apparatus $(z,\tau)$ 
the initial state $a$ is guaranteed to
evolve to the final state $b$.  This rule of interpretation for $\rightarrow$
motivates the properties assumed for the relation in the axioms.

Our version of the arrow relation $\rightarrow$ is slightly 
different, in that it encompasses non-deterministic processes.  
Again, we envision an apparatus $(z,\tau)$ and we write 

\begin{equation}
	a + z \stackrel{\tau}{\dashrightarrow} b + z ,
\end{equation}
to mean that the initial state $a + z$ may possibly
evolve to $b + z$ over the stated interval.  Then,
for eidostates $A$ and $B$, the relation $A \rightarrow B$
means that, if $a \in A$ and $b \in \stateset$, 
then there exists an apparatus $(z,\tau)$ such that
$a + z \stackrel{\tau}{\dashrightarrow} b + z$ only if
$b \in B$.  Each possible initial state $a$ in $A$ might evolve to
one or more final states, but all of the possible final 
states are contained in $B$.  For singleton eidostates $a$ and $b$, the
relation $a \rightarrow b$ represents a deterministic
process, as in Giles's theory.

Again, we use our heuristic interpretation of $\rightarrow$ to 
motivate the essential properties specified in an axiom:
\begin{axiom}Processes:  \label{axiom:processes}
Let eidostates $A,B,C \in \eidoset$, and $s \in \stateset$.
\begin{description}
	\item(a)  If $A \sim B$, then $A \rightarrow B$.
	\item(b)  If $A \rightarrow B$ and $B \rightarrow C$, then $A \rightarrow C$.
	\item(c)  If $A \rightarrow B$, then $A + C \rightarrow B + C$.
	\item(d)  If $A + s \rightarrow B + s$, then $A \rightarrow B$.
\end{description}
\end{axiom}
Part (a) of the axiom asserts that it is always possible to ``rearrange the pieces''
of a composite state.  Thus, $A+B \rightarrow B+A$ and so on. 
Of course, since $\sim$ is a symmetric relation, $A \sim B$
implies both $A \rightarrow B$ and $B \rightarrow A$, which
we might write as $A \leftrightarrow B$.

Part (b) says that a process from $A$ to $C$ may proceed via
an intermediate eidostate $B$.  Parts (c) and (d) establish the 
relationship between $\rightarrow$ and $+$.  We can always append
a ``bystander'' state $C$ to any process $A \rightarrow B$, and 
a bystander singleton state $s$ in $A + s \rightarrow B + s$ can 
be viewed as part of the apparatus that accomplishes the 
transformation $A \rightarrow B$.

We use the $\rightarrow$ relation to characterize various
conceivable processes.  A \emph{formal process} is simply a pair 
of eidostates $\fproc{A}{B}$.  Following Giles, we
may say that $\fproc{A}{B}$ is:
\begin{itemize}
	\item  \emph{natural} if $A \rightarrow B$;
	\item  \emph{antinatural} if $B \rightarrow A$;
	\item  \emph{possible} if $A \rightarrow B$ or $B \rightarrow A$ (which may
		be written $A \leftorright B$);
	\item  \emph{impossible} if it is not possible;
	\item  \emph{reversible} if $A \leftrightarrow B$; and
	\item  \emph{irreversible} if it is possible but not reversible.
\end{itemize}
Thus, any formal process must be one of four \emph{types}:
reversible, natural irreversible, antinatural irreversible, or
impossible.

Any nonempty subset of an eidostate is an eidostate.  If the
eidostate $A$ is an enumeration of possible states, the proper subset
eidostate $B \subsetneq A$ may be regarded as an enumeration
with some additional condition present that eliminates one or more
of the possibilities.  What can we say about processes involving
these ``conditional'' eidostates?  To answer this question, we
introduce two further axioms.  The first is this:
\begin{axiom} \label{axiom:austin}
	If $A,B \in \eidoset$ and $B$ is a proper subset of $A$, then $A \nrightarrow B$.
\end{axiom}
This expresses the idea that no natural process can simply
eliminate a state from a list of possibilities.  This is a deeper principle
than it first appears.  Indeed, as we will find, in our theory, it is the
essential ingredient in the Second Law of Thermodynamics.

To express the second new axiom, we must introduce the notion of
a \emph{uniform} eidostate.  The eidostate $A$ is said to be uniform
provided, for every $a,b \in A$, we have $a \leftorright b$.  That
is, every pair of states in $A$ is connected by a possible process.
This means that all of the $A$ states are ``comparable'' in some way
involving the $\rightarrow$ relation.

All singleton eidostates are uniform.  Are there any non-uniform eidostates?
The axioms in fact do not tell us.  We will find models of the axioms
that contain non-uniform eidostates and others that contain none.
Even if there are states $a,b \in \stateset$ for which $a \nrightarrow b$ 
and $b \nrightarrow a$, nothing in our axioms guarantees the existence
of an eidostate that contains both $a$ and $b$ as elements.  We 
denote the set of uniform eidostates by $\uniformset \subseteq \eidoset$.

Now, we may state the second axiom about conditional processes.
\begin{axiom}Conditional processes: \label{axiom:conditional} \hspace{0.5in}
	\begin{description}
		\item(a)  Suppose $A, A' \in \eidoset$ and $b \in \stateset$.  If $A \rightarrow b$
			and $A' \subseteq A$ then $A' \rightarrow b$.
		\item(b)  Suppose $A$ and $B$ are uniform eidostates that are each disjoint
			unions of eidostates: $A = A_{1} \cup A_{2}$ and $B = B_{1} \cup B_{2}$.
			If $A_{1} \rightarrow B_{1}$ and $A_{2} \rightarrow B_{2}$ then
			$A \rightarrow B$.
	\end{description}
\end{axiom}

The first part makes sense given our interpretation of the $\rightarrow$
relation in terms of an apparatus $(z,\tau)$.  If every $a \in A$ satisfies 
$a + z \stackrel{\tau}{\dashrightarrow} b + z$ for only one state $b$, then
the same will be true for every $a \in A' \subseteq A$.  Part (b) of the axiom
posits that, if we can find an apparatus whose dynamics transforms $A_{1}$ states
into $B_{1}$ states, and another whose dynamics transforms $A_{2}$ states
into $B_{2}$ states, then we can devise a apparatus with ``conditional dynamics'' 
that does both tasks, taking $A$ to $B$.    This is a rather strong proposition,
and we limit its scope by restricting it to the special class of uniform eidostates.

We can as a corollary extend Part (b) of Axiom~\ref{axiom:conditional} to more than 
two subsets.  That is, suppose $A,B \in \uniformset$ and the sets $A$ and $B$
are each partitioned into $n$ mutually disjoint, nonempty subsets:  
$A = A_{1} \cup \cdots \cup A_{n}$ and $B = B_{1} \cup \cdots B_{n}$.  
In addition, suppose $A_{k} \rightarrow B_{k}$ for all $k = 1, \ldots , n$. Then,~$A \rightarrow B$.

\section{Process Algebra and Irreversibility}  \label{sec:algebra}

Following Giles, we now explore the algebraic structure 
of formal processes and describe how the type of a possible process
may be characterized by a single real-valued function.  
Although the broad outlines of our development will
follow that of Giles \cite{Giles1964}, there are significant differences.  (For example,
in our theory, the unrestricted set $\procset$ of eidostate 
processes does not form a group.)

There is an equivalence relation among formal processes,
based on the similarity relation $\sim$ among eidostates.
We say that $\fproc{A}{B} \doteq \fproc{C}{D}$ if there
exist singletons $x,y \in \stateset$ such that
$A+x \sim C + y$ and $B + x \sim D + y$.  That is,
if we augment the eidostates in $\fproc{A}{B}$ by $x$
and those in $\fproc{C}{D}$ by $y$, the corresponding eidostates
in the two processes are mere rearrangements of each other.  It is not hard
to establish that this is an equivalence relation.
Furthermore, equivalent processes (under $\doteq$)
are always of the same type.  To see this, suppose that
if $\fproc{A}{B} \doteq \fproc{C}{D}$ and $A \rightarrow B$.
Then, $A + x \sim C + y$, etc.,~and

\begin{equation}
	C + y \rightarrow A + x \rightarrow B + x \rightarrow D + y.
\end{equation}
Hence, $C \rightarrow D$.  It is a straightforward corollary
that $A \rightarrow B$ if and only if $C \rightarrow D$.

Processes (or more strictly, equivalence classes of processes
under $\doteq$) have an algebraic structure.  We define the 
\emph{sum} of two processes as

\begin{equation}
	\fproc{A}{B} + \fproc{C}{D} = \fproc{A+C}{B+D},
\end{equation}
and the \emph{negation} of a process as $- \fproc{A}{B} = \fproc{B}{A}$.
We call the $-$ operation ``negation'' even though in general 
$- \fproc{A}{B}$ is not the additive inverse of $\fproc{B}{A}$.
Such an inverse may not exist.  As we will see, however, the
negation does yield an additive inverse process in some important
special contexts.

The sum and negation operations on processes are both compatible
with the equivalence $\doteq$.  That is, first, 
if $\fproc{A}{B} \doteq \fproc{A'}{B'}$ then $-\fproc{A}{B} \doteq -\fproc{A'}{B'}$.
Furthermore, if $\fproc{A}{B} \doteq \fproc{A'}{B'}$ and $\fproc{C}{D} \doteq \fproc{C'}{D'}$, then

\begin{equation}
	\fproc{A}{B} + \fproc{C}{D} \doteq \fproc{A'}{B'} + \fproc{C'}{D'}.
\end{equation}
This means that the sum and negation operations are well-defined on
equivalence classes of processes.  If $\eqclass{\fproc{A}{B}}$ 
and $\eqclass{\fproc{C}{D}}$ denote two equivalence classes 
represented by $\fproc{A}{B}$ and $\fproc{C}{D}$, then

\begin{equation}
	\eqclass{\fproc{A}{B}} + \eqclass{\fproc{C}{D}} = \eqclass{\fproc{A+C}{B+D}},
\end{equation}
which is the same regardless of the particular representatives chosen to
indicate the two classes being~added.

We let $\procset$ denote the collection of all equivalence classes of formal processes.
Then, the sum operation both is associative and commutative on $\procset$. 
Furthermore, it contains a zero element $\zeroprocess = \eqclass{\fproc{s}{s}}$
for some singleton $s$.  That is, if $\Gamma = \eqclass{\fproc{A}{B}}$,

\begin{equation}
	\Gamma + \zeroprocess = 
	\eqclass{\fproc{A}{B}} + \eqclass{\fproc{s}{s}} = \eqclass{\fproc{A+s}{B+s}}
		= \eqclass{\fproc{A}{B}} = \Gamma.
\end{equation}
The set $\procset$ is thus a monoid (a semigroup with identity) under the
operation $+$.
Moreover, the subset $\procset_{S}$ of singleton processes---equivalence 
classes of formal processes with singleton initial and final states---is 
actually an Abelian group, since 

\begin{equation}
	\eqclass{\fproc{a}{b}} + \eqclass{\fproc{b}{a}}
	= \eqclass{\fproc{a+b}{b+a}} = \eqclass{\fproc{a+b}{a+b}} = \zeroprocess.
\end{equation}
In $\procset_{S}$, the negation operation does yield the additive inverse of an element.

The $\rightarrow$ relation on states induces a corresponding relation on processes.
If $\Gamma, \Delta \in \procset$, we say that $\Gamma \rightarrow \Delta$
provided the process $\Gamma - \Delta$ is natural (a condition we might
formally write as $\Gamma - \Delta \rightarrow \zeroprocess$).  Intuitively,
this means that process $\Gamma$ can ``drive process $\Delta$ backward'',
so that $\Gamma$ and the opposite of $\Delta$ together form a natural
process.

Now, suppose that $\hat{\procset}$ is a collection of equivalence classes of 
possible processes that is closed under addition and negation.  
An \emph{irreversibility function} $\irrev$ is a real-valued function
on $\hat{\procset}$ such that
\begin{enumerate}
	\item  If $\Gamma, \Delta \in \hat{\procset}$, then $\irrev(\Gamma+\Delta)
		= \irrev(\Gamma) + \irrev(\Delta)$.
	\item  The value of $\irrev$ determines the type of the processes in $\hat{\procset}$:
		\begin{itemize}
			\item  $\irrev(\Gamma) > 0$ whenever $\Gamma$ is natural irreversible.
			\item  $\irrev(\Gamma) = 0$ whenever $\Gamma$ is reversible.
			\item  $\irrev(\Gamma) < 0$ whenever $\Gamma$ is antinatural irreversible.
		\end{itemize}
\end{enumerate}
An irreversibility function, if it exists, has a number of elementary
properties.  For instance, since the process $\Gamma + (-\Gamma)$ is
always reversible for any $\Gamma \in \hat{\procset}$, it follows
that $\irrev(-\Gamma) = - \irrev(\Gamma)$.  In fact, we can show that
all irreversibility functions on $\hat{\procset}$ are essentially the same:

\begin{theorem}  \label{theorem:irreversibilityuniqueness}
An irreversibility function on $\hat{\procset}$ is unique up to an overall
positive factor.
\end{theorem}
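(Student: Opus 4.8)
The plan is to show that any two irreversibility functions $\irrev$ and $\irrev'$ on $\hat{\procset}$ satisfy $\irrev' = c\,\irrev$ for a single constant $c > 0$, by exploiting the fact that the \emph{type} of a process is an intrinsic property (fixed by the $\rightarrow$ relation alone) which both functions must register through their signs. First I would dispose of the degenerate case in which every process in $\hat{\procset}$ is reversible: then $\irrev = \irrev' = 0$ identically and the claim holds vacuously for any positive factor. Otherwise, since $\hat{\procset}$ is closed under negation, there is some natural irreversible $\Delta_{0} \in \hat{\procset}$ (if some $\Gamma$ were antinatural irreversible, then $-\Gamma$ would be natural irreversible), and property~2 forces $\irrev(\Delta_{0}) > 0$ and $\irrev'(\Delta_{0}) > 0$. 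I then set $c = \irrev'(\Delta_{0})/\irrev(\Delta_{0}) > 0$ and aim to prove $\irrev'(\Gamma) = c\,\irrev(\Gamma)$ for every $\Gamma \in \hat{\procset}$.

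The central step is a comparison of integer multiples. For a positive integer $m$ and an integer $n$, the combination $m\Gamma - n\Delta_{0}$ again lies in $\hat{\procset}$, by closure under addition and negation, and so it possesses a well-defined type. Using additivity (property~1) together with $\irrev(-\Gamma) = -\irrev(\Gamma)$, I obtain $\irrev(m\Gamma - n\Delta_{0}) = m\,\irrev(\Gamma) - n\,\irrev(\Delta_{0})$, and likewise for $\irrev'$. Because the type of $m\Gamma - n\Delta_{0}$ does not depend on which irreversibility function is used to read it off, property~2 applied to each function yields the chain of equivalences
\begin{equation}
	m\,\irrev(\Gamma) \geq n\,\irrev(\Delta_{0})
	\iff m\Gamma - n\Delta_{0} \text{ is natural}
	\iff m\,\irrev'(\Gamma) \geq n\,\irrev'(\Delta_{0}),
\end{equation}
where I use the immediate consequence of property~2 that $\irrev(\Theta) \geq 0$ exactly when $\Theta$ is natural (that is, natural irreversible or reversible).

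Dividing through by the positive quantities $m\,\irrev(\Delta_{0})$ and $m\,\irrev'(\Delta_{0})$, the displayed equivalence says precisely that the real numbers $\irrev(\Gamma)/\irrev(\Delta_{0})$ and $\irrev'(\Gamma)/\irrev'(\Delta_{0})$ dominate exactly the same set of rationals $n/m$; hence they determine the same Dedekind cut and must be equal. Rearranging gives $\irrev'(\Gamma) = c\,\irrev(\Gamma)$, as required. I expect the main obstacle to lie entirely in the central step, namely in confirming cleanly that $m\Gamma - n\Delta_{0}$ genuinely belongs to $\hat{\procset}$ so that both functions apply to it, and in arguing carefully that its natural/reversible/antinatural type is an intrinsic datum independent of the chosen function. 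Once that observation is secured, the remainder is the standard real-number comparison and is routine.
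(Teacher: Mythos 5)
Your proposal is correct and follows essentially the same route as the paper: both arguments reduce to the observation that the intrinsic type of $m\Gamma - n\Delta_{0}$ forces the two functions to rank the ratio $\irrev(\Gamma)/\irrev(\Delta_{0})$ against every rational $n/m$ in the same way. The paper phrases this as a proof by contradiction (a rational strictly between the two ratios would make $m\Gamma - n\Delta_{0}$ simultaneously natural irreversible and antinatural irreversible), whereas you run the same comparison directly as a Dedekind-cut identity; the difference is purely presentational.
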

\begin{proof}
	Suppose $\irrev_{1}$ and $\irrev_{2}$ are irreversibility functions on the set
	$\hat{\procset}$.  If $\Gamma$ is reversible, then $\irrev_{1}(\Gamma) =
	\irrev_{2}(\Gamma) = 0$.  If there are no irreversible processes in $\hat{\procset}$,
	then $\irrev_{1} = \irrev_{2}$.
	
	Now, suppose that $\hat{\procset}$ contains at least one irreversible process $\Gamma$,
	which we may suppose is natural irreversible.  Thus, both $\irrev_{1}(\Gamma) > 0$ and
	$\irrev_{2}(\Gamma) > 0$.  Consider some other process  $\Delta \in \hat{\procset}$.  
	We must show~that
	
	\begin{equation}
		\frac{\irrev_{1}(\Delta)}{\irrev_{1}(\Gamma)} = \frac{\irrev_{2}(\Delta)}{\irrev_{2}(\Gamma)} .
	\end{equation}
	We proceed by contradiction, imagining that the two ratios are not equal and 
	(without loss of generality) that the second one is larger.  Then, there exists a 
	rational number $m/n$ (with $n > 0$) such that
	
	\begin{equation}
		\frac{\irrev_{1}(\Delta)}{\irrev_{1}(\Gamma)} 
			< \frac{m}{n} < \frac{\irrev_{2}(\Delta)}{\irrev_{2}(\Gamma)} .
	\end{equation}
	The first inequality yields $m \irrev_{1}(\Gamma) - n \irrev_{1}(\Delta) > 0$,
	so that the process $m \Gamma - n \Delta$ (that is, $m \Gamma + n (- \Delta)$)
	must be natural irreversible.  The second inequality yields
	$m \irrev_{2}(\Gamma) - n \irrev_{2}(\Delta) < 0$,
	so that the process $m \Gamma - n \Delta$ must be antinatural irreversible.
	These cannot both be true, so the original ratios must be~equal.
\end{proof}

The additive irreversibility function $\irrev$ is only defined on 
a set $\hat{\procset}$ of possible processes.  However, we can 
under some circumstances extend an additive function to
a wider domain.  It is convenient to state here the general 
mathematical result we will use later for this purpose:
\begin{theorem}[``Hahn--Banach theorem'' for Abelian groups.] \label{theorem:hahn-banach}
	Let $\mathcal{G}$ be an Abelian group.  Let $\phi$ be a real-valued
	function defined and additive on a subgroup $\mathcal{G}_{0}$
	of $\mathcal{G}$.  Then, there exists an additive function 
	$\phi'$ defined on $\mathcal{G}$ such that $\phi'(x) = \phi(x)$
	for all $x$ in $\mathcal{G}_{0}$.
\end{theorem}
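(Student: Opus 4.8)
The plan is to follow the template of the classical Hahn--Banach argument, extending $\phi$ one generator at a time and using Zorn's lemma to reach all of $\mathcal{G}$. First I would consider the collection of pairs $(\mathcal{H}, \psi)$, where $\mathcal{G}_0 \subseteq \mathcal{H} \subseteq \mathcal{G}$ is a subgroup and $\psi$ is an additive function on $\mathcal{H}$ agreeing with $\phi$ on $\mathcal{G}_0$. I would partially order these by declaring $(\mathcal{H}_1,\psi_1) \preceq (\mathcal{H}_2,\psi_2)$ when $\mathcal{H}_1 \subseteq \mathcal{H}_2$ and $\psi_2$ restricts to $\psi_1$ on $\mathcal{H}_1$. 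The union of any chain of such pairs is again such a pair (the union of a chain of subgroups is a subgroup, and the shared values define an additive $\psi$ on it), so every chain has an upper bound. Zorn's lemma then provides a maximal element $(\mathcal{H}^\ast, \psi^\ast)$, and it remains only to show $\mathcal{H}^\ast = \mathcal{G}$, for then $\phi' = \psi^\ast$ is the desired extension.

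Suppose instead that $\mathcal{H}^\ast \subsetneq \mathcal{G}$, and pick $g \in \mathcal{G} \setminus \mathcal{H}^\ast$. I would extend $\psi^\ast$ to the subgroup $\langle \mathcal{H}^\ast, g\rangle = \{ h + ng : h \in \mathcal{H}^\ast,\ n \in \mathbb{Z}\}$, contradicting maximality. The only real choice is a consistent value $c = \psi'(g)$. The set of integers $n$ with $ng \in \mathcal{H}^\ast$ is a subgroup of $\mathbb{Z}$, hence equal to $d\mathbb{Z}$ for some integer $d \geq 0$. If $d = 0$, no nonzero multiple of $g$ lies in $\mathcal{H}^\ast$, so each element of the larger group has a unique expression $h + ng$; I may set $c = 0$ and define $\psi'(h+ng) = \psi^\ast(h)$, which is plainly additive.

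If $d > 0$, then $dg \in \mathcal{H}^\ast$, and consistency forces $d\,c = \psi'(dg) = \psi^\ast(dg)$. This is the one place where the target group matters: because $\mathbb{R}$ is divisible, I can set $c = \psi^\ast(dg)/d$ and then define $\psi'(h + ng) = \psi^\ast(h) + nc$. To check this is well defined, suppose $h + ng = h' + n'g$; then $h - h' = (n' - n)g$ lies in $\mathcal{H}^\ast$, so $d \mid (n' - n)$, say $n' - n = kd$, whence $h - h' = k(dg)$ and $\psi^\ast(h) - \psi^\ast(h') = k\,\psi^\ast(dg) = kd\,c = (n'-n)c$. Thus $\psi^\ast(h) + nc = \psi^\ast(h') + n'c$, and additivity of $\psi'$ follows directly from that of $\psi^\ast$. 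In either case $(\langle \mathcal{H}^\ast, g\rangle, \psi')$ strictly dominates $(\mathcal{H}^\ast,\psi^\ast)$, the contradiction sought.

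The main obstacle is precisely the $d > 0$ case: the relation $dg \in \mathcal{H}^\ast$ pins down the value $d\,c$, and a consistent $c$ exists only because one may divide by $d$ in $\mathbb{R}$. Equivalently, one could bypass the explicit single-generator step and simply invoke the fact that $\mathbb{R}$, being a divisible Abelian group, is injective in the category of Abelian groups; the Zorn argument above is in effect a direct proof of that injectivity specialized to the situation at hand.
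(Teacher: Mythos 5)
Your proof is correct and complete. The paper itself states this result without proof, treating it as a known auxiliary fact (it is essentially the statement that $\mathbb{R}$ is an injective $\mathbb{Z}$-module, i.e.\ a divisible Abelian group), so there is no argument in the text to compare against; your Zorn's-lemma construction with the single-generator extension step --- including the correct treatment of the two cases $d=0$ and $d>0$ for the subgroup $\{n : ng \in \mathcal{H}^{\ast}\} = d\mathbb{Z}$, and the observation that divisibility of $\mathbb{R}$ is exactly what makes the choice $c = \psi^{\ast}(dg)/d$ available --- is the standard proof and supplies precisely what the paper leaves to the reader.
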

Note that the extension $\phi'$ is not necessarily unique---that is,
there may be many different extensions of a single additive function
$\phi$.

\section{Information and Entropy}  \label{sec:information}

In our theory, \emph{information} resides in the distinction among
possible states.  Thus, the eidostate $A = \{ a_{1}, a_{2}, a_{3} \}$
represents information in the distinction among it elements.
However, thermodynamic states such as the $a_{k}$s may also
have other properties such as energy, particle content, and so on.
To disentangle the concept of information from the other properties
of these states, we introduce a notion of a ``pure'' information state.

The intuitive idea is this.  We imagine that the world contains 
freely available memory devices.  Different configurations of these
memories---different memory records---are distinct states that 
are degenerate in energy and every other conserved quantity.
Any particular memory record can thus be freely created from or
reset to some null value.

We therefore define a \emph{record state} to be an element 
$r \in \stateset$ such that there exists $a \in \stateset$ so that
$a \leftrightarrow a + r$.  The state $a$ can be thought of as
part of the apparatus that reversibly exchanges the particular
record $r$ with a null value.  In fact, if $A$ is any eidostate
at all, we find that

\begin{equation}
	A + a \leftrightarrow A + (a + r) \leftrightarrow (A+r) + a,
\end{equation}
and so by cancellation of the singleton state $a$, $A \leftrightarrow A + r$.  
We denote the set of record states by $\recordset$.  
If $r,s \in \recordset$, then $r+s \in \recordset$,
and furthermore $r \leftrightarrow s$.  
Any particular record state can be
reversibly transformed into any other,
a fact that expresses the arbitrariness of the ``code''
used to represent information in a memory device.

An \emph{information state} is an eidostate whose elements
are all record states, and the set of such eidostates is
denoted $\infoset$.  All information states are uniform eidostates.
An \emph{information process} is one
that is equivalent to a process $\fproc{I}{J}$,
where $I,J \in \infoset$.  (Of course,
information processes also include processes of the
form $\fproc{I+x}{J+x}$ for a non-record state $x \in \stateset$,
as well as more complex combinations of record and non-record
states.)  Roughly speaking, an information process is a kind
of computation performed on information states.

It is convenient at this point to define a \emph{bit state}
(denoted $\bitstate$) as an information state containing 
exactly two record states.  That is, $\bitstate = \{ r_{0}, r_{1} \}$.
A \emph{bit process} is an information process 
of the form $\bitproc = \fproc{r}{\bitstate}$ for
some $r \in \recordset$---that is, a process by which a
bit state is created from a single record state.

\begin{figure}
\begin{center}
\includegraphics[width=4.5in]{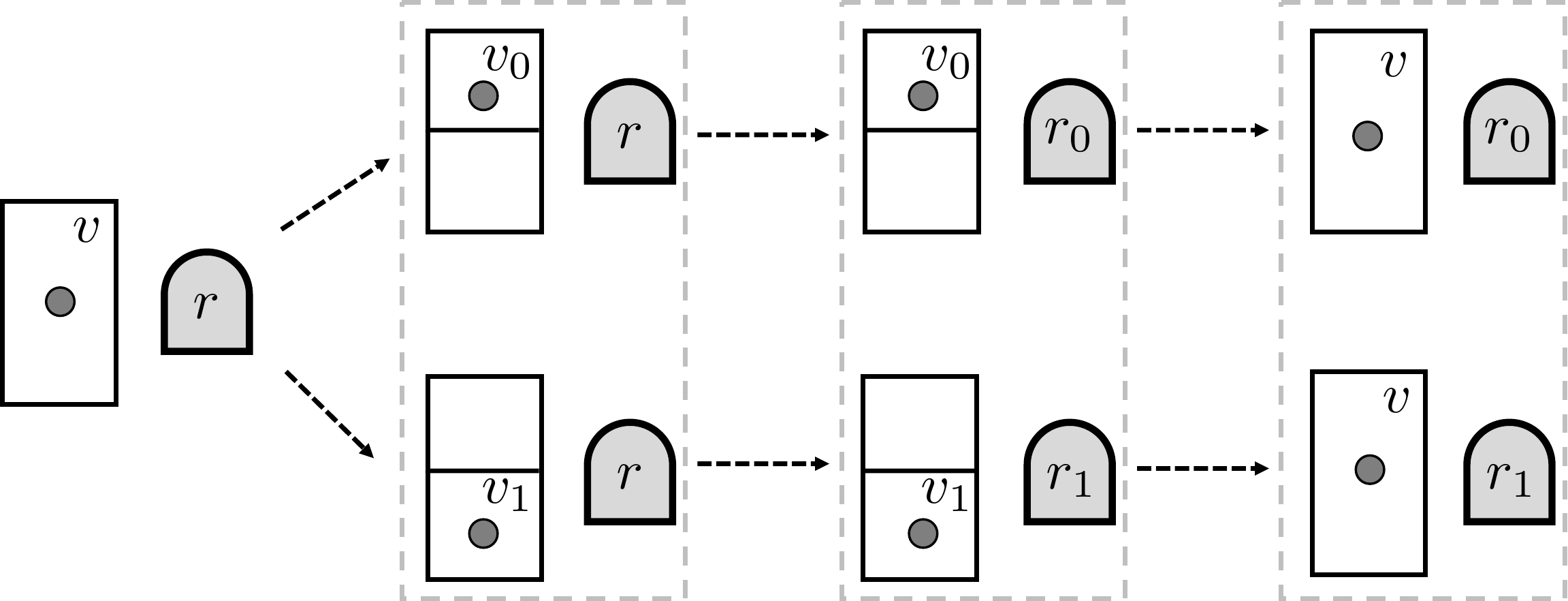}
	\end{center}
	\caption{\label{fig:bitprocess} Maxwell's demon interacting
		with a one-particle gas, illustrating a bit process
		$\fproc{r}{\{r_{0},r_{1}\}}$.}
\end{figure}

We can illustrate a bit process in a thermodynamics context 
by considering a thought-experiment involving Maxwell's demon.
We imagine that the demon operates on a one-particle gas
confined to a volume, 
a situation described by gas state $v$ (see Figure~\ref{fig:bitprocess}).
The demon, whose memory initially has a record state $r$, inserts
a partition into the container, dividing the volume in two
halves labeled 0 and 1.  The gas molecule is certainly in
one sub-volume or the other.  The demon
then records in memory which half the particle occupies.
Finally, the partition is removed and the particle wanders 
freely around the whole volume of the container.  
In thermodynamic terms, the gas state relaxes 
to the original state $v$.  The overall process establishes
the following relations:

\begin{equation}
	v + r \rightarrow \{v_{0} , v_{1} \} + r \rightarrow
		\{ v_{0} + r_{0} , v_{1} + r_{1} \} \rightarrow
		v + \{ r_{0} , r_{1} \},
\end{equation}
and so $r \rightarrow \{r_{0}, r_{1} \}$.  In our
example, the bit process $\bitproc = \fproc{r}{\bitstate}$ is natural one.

However, we do not yet know that there actually \emph{are} 
information states and information processes in our theory.  We
address this by a new axiom.
\begin{axiom}Information: \label{axiom:information}
	There exist a bit state and a possible bit process.
\end{axiom}

Axiom~\ref{axiom:information} has a wealth of consequences.
Since an information state exists, record states necessarily
also exist.  There are infinitely many record states, since
for any $r \in \recordset$ we also have distinct states
$r+r, r+(r+r), \ldots, n r , \ldots,$ all in $\recordset$.  We may have
information states in $\infoset$ that contain arbitrarily many
record states, because $n \bitstate$ contains $2^{n}$ elements.
Furthermore, since every nonempty subset of an information state is also
an information state, for any integer $k \geq 1$ there exists $I \in \infoset$
so that $\numberin{I} = k$.

Any two bit states can be reversibly transformed into one another.  
Consider $\bitstate = \{ r_0,r_1 \}$ and $\bitstate' = \{ s_{0},s_{1} \}$.
Since $r_{0} \rightarrow s_{0}$ and $r_{1} \rightarrow s_{1}$, it
follows by Axiom~\ref{axiom:conditional} that 
$\bitstate \rightarrow \bitstate'$ (and hence $\bitstate 
\leftrightarrow \bitstate'$.)

If any bit process is possible, then every bit process is possible.
Furthermore, every such process is of the natural irreversible type.  
To see why, consider the bit state $\bitstate = \{ r_{0}, r_{1} \}$ and suppose
$\bitstate \rightarrow r$ for a record state $r$.  Since 
$r \rightarrow r_{0}$, this implies that $\{ r_{0}, r_{1} \} \rightarrow r_{0}$,
which is a violation of Axiom~\ref{axiom:austin}.  Therefore,
it must be that $r \rightarrow \bitstate$ but $\bitstate \nrightarrow r$;
and this is true for any choice of $r$ and $\bitstate$.

Now, we can prove that every information process is possible,
and that the $\rightarrow$ relation is determined solely by 
the relative sizes of the initial and final information states.
\begin{theorem}  \label{theorem:infoarrow}
	Suppose $I,J \in \infoset$.  Then, $I \rightarrow J$ if and only if 
	$\numberin{I} \leq \numberin{J}$.
\end{theorem}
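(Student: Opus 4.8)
The plan is to prove both directions by reducing everything to the cardinalities of the information states, exploiting the fact that record states are freely interchangeable. First I would establish a workhorse lemma: any two information states of the same size are reversibly equivalent. Given $I=\{r_{1},\ldots,r_{k}\}$ and $J=\{s_{1},\ldots,s_{k}\}$, every pair $r_{i},s_{i}$ of record states satisfies $r_{i}\leftrightarrow s_{i}$, so $\{r_{i}\}\rightarrow\{s_{i}\}$. Since information states are uniform, I can partition $I$ and $J$ into their singleton elements and apply the $n$-subset corollary of Axiom~\ref{axiom:conditional}(b) to conclude $I\rightarrow J$; by symmetry $I\leftrightarrow J$. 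This lemma then lets me replace any information state by any other of the same size whenever convenient.

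For the ``only if'' direction, suppose $I\rightarrow J$ with $\numberin{I}=m$ and $\numberin{J}=n$, and assume for contradiction that $m>n$. I would choose a subset $J'\subseteq I$ with $\numberin{J'}=n$; being a nonempty subset of an eidostate it is itself an (information) eidostate, and by the lemma $J\leftrightarrow J'$. Transitivity then gives $I\rightarrow J\rightarrow J'$, that is, $I\rightarrow J'$ with $J'\subsetneq I$, directly contradicting Axiom~\ref{axiom:austin}. Hence $m\leq n$.

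For the ``if'' direction, it suffices by the lemma to show that for every $m\geq 1$ there is a process from a size-$m$ information state to a size-$(m+1)$ one, since chaining such increments and then applying the lemma converts $I$ into any $J$ with $\numberin{J}\geq\numberin{I}$. To increment the size, I would take an information state $\{r_{1},\ldots,r_{m}\}$ and split off the single element $r_{1}$: by Axiom~\ref{axiom:information} and the remarks following it, every bit process is natural, so $\{r_{1}\}\rightarrow\{s_{0},s_{1}\}$ for suitable fresh records $s_{0},s_{1}$, while the remaining block $\{r_{2},\ldots,r_{m}\}$ maps to itself reversibly. Choosing $s_{0},s_{1}$ distinct from $r_{2},\ldots,r_{m}$ keeps the target a disjoint union, and since all the eidostates involved are information states (hence uniform), Axiom~\ref{axiom:conditional}(b) yields a process onto a size-$(m+1)$ information state.

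The main obstacle I anticipate is the bookkeeping in the ``if'' direction: I must verify at each step that the hypotheses of the conditional-process axiom genuinely hold---that the relevant eidostates are uniform, that the unions are disjoint, and that the freshly introduced records can be chosen to avoid collisions---rather than assuming the increment ``obviously'' works. The size-equivalence lemma is what makes both directions go through cleanly, so I would take care to state and prove it before attempting either implication.
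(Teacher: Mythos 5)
Your proposal is correct and follows essentially the same route as the paper's proof: the equal-cardinality case via the $n$-fold extension of Axiom~\ref{axiom:conditional}(b), the impossibility of decreasing cardinality via a same-size proper subset and Axiom~\ref{axiom:austin}, and the increasing case by a single-element increment using a bit process followed by chaining. The only cosmetic difference is that you construct an abstract size-$(m+1)$ target and then invoke your equivalence lemma, whereas the paper partitions the given target $J$ directly so that one block is a bit state; both are sound.
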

\begin{proof}
	To begin with, we can see that $\numberin{I} = \numberin{J}$
	implies $I \rightarrow J$.  This is because we can write
	$I = \{ r_{1},\ldots,r_{n} \}$ and $J = \{s_{1} , \ldots ,s_{n} \}$.
	Since $r_{k} \rightarrow s_{k}$ for every $k = 1,\ldots,n$, the
	finite extension of Axiom~\ref{axiom:conditional} tells us that 
	$I \rightarrow J$.
	
	Now, imagine that $\numberin{I} > \numberin{J}$.  There exists
	a proper subset $I'$ of $I$ so that $\numberin{I'} = \numberin{J}$,
	and thus $J \rightarrow I'$.  If it happened that $I \rightarrow J$,
	it would follow that $I \rightarrow I'$, a contradiction of 
	Axiom~\ref{axiom:austin}.  Hence, $\numberin{I} > \numberin{J}$
	implies $I \nrightarrow J$.
	
	It remains to show that if $\numberin{I} < \numberin{J}$,
	it must be that $I \rightarrow J$.  As a first case, suppose
	that $\numberin{I} = n$ and $\numberin{J} = n+1$.
	Letting $I = \{ r_{1} , \ldots, r_{n} \}$ and $J = \{ s_{1} , \ldots , s_{n}, s_{n+1} \}$,
	we note that $r_{1} \rightarrow s_{1}, r_{2} \rightarrow s_{2}, \ldots,
	r_{n} \rightarrow \{s_{n}, s_{n+1} \}$ (the last being a bit process).
	It follows that $I \rightarrow J$.
	
	We now proceed inductively.  Given $\numberin{J} = \numberin{I} + n$,
	we can imagine a sequence of information states $K_{m}$ with successive
	numbers of elements between $\numberin{I}$ and $\numberin{J}$, so 
	that $\numberin{K_{m}} = \numberin{I} + m$.
	From what we have already proved,
	
	\begin{equation}
		I \rightarrow K_{1} \rightarrow \cdots 
			\rightarrow K_{n-1} \rightarrow J,
	\end{equation}
	and by transitivity of the $\rightarrow$ relation we may conclude that $I \rightarrow J$.
	Therefore, $I \rightarrow J$ if and only if $\numberin{I} \leq \numberin{J}$.
\end{proof}

Intuitively, the greater the number of distinct record states in an information
state, the more information it represents.  Thus, under our axioms, a natural
information process may maintain or increase the amount of information, 
but never decrease it.

We can sharpen this intuition considerably.  Let $\procset_{I}$ be the set of
information processes, which is closed under addition and negation, and
in which every process is possible.  We can define a function
$\irrev$ on $\procset_{I}$ as follows:  For $\Gamma = \eqclass{\fproc{I}{J}} \in \procset_{I}$,

\begin{equation}
	\irrev(\Gamma) = \log \left (  \frac{\numberin{J}}{\numberin{I}}  \right )
		= \log \numberin{J} - \log \numberin{I} .
\end{equation}

The logarithm function guarantees the additivity of $\irrev$ when two processes
are combined, and the sign of $\irrev$ exactly determines whether $I \rightarrow J$.
Thus, $\irrev$ is an irreversibility function on $\procset_{I}$, as the notation suggests.
This function is unique up to a positive constant factor---i.e., the choice of 
logarithm base.  If we choose to use base-2 logarithms, so that a bit process
has $\irrev(\bitproc) = 1$, then $\irrev$ is uniquely determined.

The irreversibility function on information processes can be expressed
in terms of a function on information states.  Suppose we have a 
collection of eidostates $\mathscr{K} \subseteq \eidoset$ that is
closed under the $+$ operation.  With Giles, we define a \emph{quasi-entropy}
on $\mathscr{K}$ to be a real-valued function $\entropy$ such that, 
for $A,B \in \mathscr{K}$:
\begin{enumerate}
	\item  $\entropy(A+B) = \entropy(A) + \entropy(B)$.
	\item  If $\fproc{A}{B}$ is natural irreversible, then $\entropy(A) < \entropy(B)$.
	\item  If $\fproc{A}{B}$ is reversible, then $\entropy(A) = \entropy(B)$.
\end{enumerate}
(A full ``entropy'' function satisfies one additional requirement, which we 
will address in Section~\ref{sec:mechanical} below.)
Given a quasi-entropy $\entropy$, we can derive an irreversibility function
$\irrev$ on possible
$\mathscr{K}$-processes by $\irrev(\fproc{A}{B}) = \entropy(B) - \entropy(A)$.

Obviously, $\entropy(I) = \log \numberin{I}$ is a quasi-entropy function on $\infoset$
that yields the irreversibility function on $\procset_{I}$.  
We recognize it as the Hartley--Shannon entropy of an information
source with $\numberin{I}$ possible outputs \cite{CoverThomas}.
In fact, this is the only possible quasi-entropy on $\infoset$.  
Since every information process is possible,
two different quasi-entropy functions $\entropy$ and $\entropy'$ can only differ
by an additive constant.  However, since $I + r \leftrightarrow I$ for $I \in \infoset$
and $r \in \recordset$, we know that $\entropy(r) = 0$ for any quasi-entropy.
Therefore, $\entropy(I) = \log \numberin{I}$
is the unique quasi-entropy function for information states.  The 
quasi-entropy of a bit state is $\entropy(\bitstate) = 1$.

\section{Demons}  \label{sec:demons}

Maxwell's demon accomplishes changes in thermodynamic
states by acquiring and using information.  For example, a demon that
operates a trapdoor between two containers of gas can arrange for all
of the gas molecules to end up in one of the containers, ``compressing''
the gas without work.  As we have seen, it is also possible to imagine 
a reversible demon, which acquires and manipulates information 
in a completely reversible way.  If such a demon produces 
a transformation from state $x$ to state $y$
by acquiring $k$ bits of information in its memory, it can accomplish the reverse
transformation (from $y$ to $x$) while erasing $k$ bits from its memory.

Maxwell's demon is a key concept in axiomatic information thermodynamics,
and we introduce a new axiom to describe ``demonic'' processes.
\begin{axiom}Demons: \label{axiom:demons}
	Suppose $a,b \in \stateset$ and $J \in \infoset$ such that $a \rightarrow b + J$.
	\begin{description}
		\item(a)  There exists $I \in \infoset$ such that $b \rightarrow a+I$.
		\item(b)  For any $I \in \infoset$, either $a \rightarrow b + I$ or $b+I \rightarrow a$.
	\end{description} 
\end{axiom}
In Part (a), we assert that what one demon can do (transforming $a$ to $b$
by acquiring information in $J$), another demon can undo (transforming
$b$ to $a$ by acquiring information in $I$).  Part (b) envisions a reversible
demon.  Any amount of information in $I$ is either large enough that we can
turn $a$ to $b$ by acquiring $I$, or small enough that we can erase the information
by turning $b$ to $a$.

A process $\fproc{A}{B}$ is said to be \emph{demonically possible} if one
of two equivalent conditions hold:
\begin{itemize}
	\item  There exists an information state $J \in \infoset$ such that 
		either $A \rightarrow B + J$ or $B \rightarrow A + J$.
	\item  There exists an information process $\fproc{I}{J} \in \procset_{I}$ such
		that $\fproc{A}{B} + \fproc{I}{J}$ is possible; that is, either 
		$A + I \rightarrow B + J$ or $B + J \rightarrow A + I$.
\end{itemize}
It is not hard to see that these are equivalent.  Suppose we have
$J \in \infoset$ such that $A \rightarrow B + J$.  Then, for any 
$I \in \infoset$, $A + I \rightarrow B + (I+J)$.  Conversely, we 
note that $A \rightarrow A + I$ for any $I \in \infoset$.  Thus,
if $A + I \rightarrow B + J$ then $A \rightarrow B + J$ as
well.

If a process is possible, then it is also demonically possible, since if
$A \rightarrow B$ it is also true that $A + I \rightarrow B + I$.  For
singleton processes in $\procset_{S}$, moreover, the converse is
also true.  Suppose $a,b \in \stateset$ and $\fproc{a}{b}$ is demonically possible.
Then, there exists $I \in \infoset$ such that either $a \rightarrow b + I$ or
$b \rightarrow a + I$.  Either way, either trivially or by an application
of Axiom~\ref{axiom:demons}, there must be $J \in \infoset$ so that
$a \rightarrow b + J$.  A single record state $r \in \recordset$ is a
singleton information state in $\infoset$.  Thus, by Axiom~\ref{axiom:demons}
it must be that either $a \rightarrow b + r$ or $b + r \rightarrow a$.
Since $b + r \leftrightarrow b$, we find that $a \leftorright b$,
and so $\fproc{a}{b}$ is possible.

A singleton process is demonically possible if and only if it is
possible.  This means that we can use processes involving
demons to understand processes that do not.  In fact, we can
use Axiom~\ref{axiom:demons} to prove a highly significant
fact about the $\rightarrow$ relation on singleton eidostates.

\begin{theorem}  \label{theorem:comparison}
	Suppose $a,b,c \in \stateset$.  If $\fproc{a}{b}$
	and $\fproc{a}{c}$ are possible, then $\fproc{b}{c}$
	is possible.
\end{theorem}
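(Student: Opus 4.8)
The plan is to reduce the claim to a statement about \emph{demonic} possibility and then exploit the symmetry that Axiom~\ref{axiom:demons}(a) confers on demonic reachability. Recall that we have just established that, for a singleton process, being demonically possible is equivalent to being possible. Since $b,c \in \stateset$, it therefore suffices to show that $\fproc{b}{c}$ is demonically possible, i.e., to produce an information state $K \in \infoset$ with $b \rightarrow c + K$ (or, symmetrically, $c \rightarrow b + K$).

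First I would convert each hypothesis into a relation that routes ``through $a$'' in a convenient direction. From $\fproc{a}{b}$ possible, I claim $b \rightarrow a + I_{1}$ for some $I_{1} \in \infoset$: if $b \rightarrow a$ this is immediate (append a record state, using $a \leftrightarrow a + r$), while if $a \rightarrow b$ I first write $a \rightarrow b + \{r\}$ and apply Axiom~\ref{axiom:demons}(a) to obtain the turned-around relation $b \rightarrow a + I_{1}$. Symmetrically, from $\fproc{a}{c}$ possible I obtain $a \rightarrow c + J_{2}$ for some $J_{2} \in \infoset$ (immediate if $a \rightarrow c$; via Axiom~\ref{axiom:demons}(a) if $c \rightarrow a$).

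Then I would simply compose. Appending $I_{1}$ to $a \rightarrow c + J_{2}$ via Axiom~\ref{axiom:processes}(c) gives $a + I_{1} \rightarrow (c + J_{2}) + I_{1}$, and transitivity (Axiom~\ref{axiom:processes}(b)) with $b \rightarrow a + I_{1}$ yields $b \rightarrow (c + J_{2}) + I_{1}$. Collecting the two appended record factors into a single information state $K$ (rearranging by $\sim$, and using that a sum of record states is again a record state, so $K \in \infoset$), I get $b \rightarrow c + K$. This exhibits $\fproc{b}{c}$ as demonically possible, and the singleton equivalence then upgrades this to ``possible,'' completing the proof.

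The conceptual heart---and the step I would watch most carefully---is the symmetrization in the second paragraph: Axiom~\ref{axiom:demons}(a) is exactly what allows a ``wrong-way'' arrow such as $a \rightarrow b$ to be reversed to $b \rightarrow a + I_{1}$ at the cost of acquiring information, so that $a$ can serve as a hub linking $b$ and $c$ regardless of the original orientations of the two possible processes. The only bookkeeping subtlety is ensuring the accumulated memory $J_{2} + I_{1}$ remains an information state (so that we stay inside the demonic framework); this is guaranteed because the record states in $\recordset$ are closed under $+$.
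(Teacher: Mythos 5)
Your proposal is correct and follows essentially the same route as the paper: obtain $b \rightarrow a + I_{1}$ and $a \rightarrow c + J_{2}$ from the two hypotheses, compose to get $b \rightarrow c + (J_{2}+I_{1})$, and invoke the equivalence of demonic possibility and possibility for singleton processes. The only difference is that you spell out the justification of the ``general fact'' (via Axiom~\ref{axiom:demons}(a) and appended record states) that the paper states without proof.
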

\begin{proof}
	First, we note the general fact that, if $\fproc{x}{y}$ is a possible
	singleton process, then there exist $I,J \in \infoset$ so
	that $x \rightarrow y + I$ and $y \rightarrow x + J$.
	
	Given our hypothesis, therefore, there must be $I,J \in \infoset$
	such that $b \rightarrow a + I$ and $a \rightarrow c + J$.
	Then
	\begin{equation}
		b \rightarrow a + I \rightarrow c + (I + J).
	\end{equation}
	That is, $\fproc{b}{c}$ is demonically possible, and hence possible.
\end{proof}

This fact is so fundamental that Giles made it an axiom in his 
theory.  For us, it is a straightforward consequence of the
axiom about processes involving demons.  It tells us that 
the set of singleton states $\stateset$ is partitioned into 
equivalence classes, within each of which all states are
related by $\leftorright$.

This statement is more primitive than, but closely related to, 
a well-known principle called the Comparison Hypothesis.
The Comparison Hypothesis deals with a state relation called 
\emph{adiabatic accessibility} (denoted $\prec$) which is 
definable in Giles's theory (and ours) but is taken
as an undefined relation in some other axiomatic developments.
According to the Comparison Hypothesis, if $X$ and $Y$ are
states in a given thermodynamic space, either $X \prec Y$ or
$Y \prec X$.  Lieb and Yngvason, for instance, show that
the Comparison Hypothesis can emerge as a consequence of
certain axioms for thermodynamic states, spaces, and equilibrium
\cite{LiebYngvason1999,LiebYngvason1998}.

Theorem~\ref{theorem:comparison} also sheds light on our axiom 
about conditional processes, Axiom~\ref{axiom:conditional}.
In Part (a) of this axiom, we suppose that $A \rightarrow b$ for
some $A \in \eidoset$ and $b \in \stateset$.  The axiom itself
allows us to infer that $a \rightarrow b$ for every $a \in A$.
However, Theorem~\ref{theorem:comparison} now implies that,
for every $a,a' \in A$, either $a \rightarrow a'$ or $a' \rightarrow a$.
In other words, Part (a) of Axiom~\ref{axiom:conditional}, like
Part (b) of the same axiom, only applies to uniform eidostates.

Finally, we introduce one further ``demonic'' axiom.
\begin{axiom}Stability: \label{axiom:stability}
	Suppose $A,B \in \eidoset$ and $J \in \infoset$.  If $nA \rightarrow nB + J$ for
	arbitrarily large values of $n$, then $A \rightarrow B$.
\end{axiom}
According to the Stability Axiom, if a demon can transform arbitrarily
many copies of eidostate $A$ into arbitrarily many copies of $B$
while acquiring a bounded amount of information, then we may
say that $A \rightarrow B$.  This can be viewed as a kind of 
``asymptotic regularization'' of the $\rightarrow$ relation. The
form of the Stability Axiom that we have chosen is a particularly simple one,
and it suffices for our purposes in this paper.  However, 
more sophisticated axiomatic developments might require a 
refinement of the axiom.  Compare, for instance, Axiom 2.1.3
in Giles to its refinement in Axiom 7.2.1 \cite{Giles1964}.

To illustrate the use of the Stability Axiom, suppose that $A,B \in \eidoset$
and $I \in \infoset$ such that $A + I \rightarrow B + I$.  Our axioms do not
provide a ``cancellation law'' for information states, so we cannot immediately 
conclude that $A \rightarrow B$.  However, we can show that $nA \rightarrow nB + I$
for all positive integers $n$.  The case $n=1$ holds since $A \rightarrow A + I 
\rightarrow B + I$.  Now, we proceed inductively, assuming that 
$nA \rightarrow nB + I$ for some $n$.  Then,
\begin{eqnarray}
	(n+1) A & \rightarrow & nA + A \nonumber \\
		& \rightarrow & (nB + I) + A \nonumber \\
		& \rightarrow & nB + (A + I) \nonumber \\
		& \rightarrow & nB + (B + I) \rightarrow (n+1) B + I .
\end{eqnarray}
Thus, $nA \rightarrow nB + I$ for arbitrarily large (and indeed all) values of $n$.
By the Stability Axiom, we see that $A \rightarrow B$.  Thus, there is after all a 
general cancellation law information states that appear on both sides of 
the $\rightarrow$ relation.

\section{Irreversibility for Singleton Processes}  \label{sec:irreversibility}

From our two ``demonic'' axioms (Axioms~\ref{axiom:demons} and \ref{axiom:stability}),
we can use the properties of information states to derive an irreversibility
function on singleton processes.  Let $\hat{\procset}_{S}$ denote the set
of possible singleton processes.  This is a subgroup of the Abelian 
group $\procset_{S}$.  Thus, if we can find an irreversibility function $\irrev$
on $\hat{\procset}_{S}$, we will be able to extend it to all of $\procset_{S}$.

We begin by proving a useful fact about possible singleton processes:
\begin{theorem}  \label{theorem:alphabits}
	Suppose $a,b \in \stateset$ so that $\alpha = \fproc{a}{b}$ is possible.
	Then, for any integers $n,m \geq 0$, either $a + m \bitstate \rightarrow b + n \bitstate$
	or $b + n \bitstate \rightarrow a + m \bitstate$.
\end{theorem}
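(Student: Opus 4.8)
The plan is to avoid manipulating the non-singleton eidostates $a + m\bitstate$ and $b + n\bitstate$ directly, and instead to reduce the comparison to the singletons $a$ and $b$ with a suitable number of bits attached, where part~(b) of Axiom~\ref{axiom:demons} can be brought to bear. First I would recall the general fact established in the proof of Theorem~\ref{theorem:comparison}: since $\fproc{a}{b}$ is possible, there exist information states $I,J \in \infoset$ with $a \rightarrow b + I$ and $b \rightarrow a + J$. In particular, the hypothesis of Axiom~\ref{axiom:demons}(b) is available in \emph{both} orientations, so for any $K \in \infoset$ we may conclude either $a \rightarrow b + K$ or $b + K \rightarrow a$, and likewise either $b \rightarrow a + K$ or $a + K \rightarrow b$.

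The key idea is to feed in the ``bit deficit'' $K = |n-m|\,\bitstate$ and then restore the remaining bits as bystanders. Suppose first that $n \geq m$. Applying Axiom~\ref{axiom:demons}(b) with $K = (n-m)\bitstate$ gives two cases. If $a \rightarrow b + (n-m)\bitstate$, then appending the bystander $m\bitstate$ via Axiom~\ref{axiom:processes}(c) yields $a + m\bitstate \rightarrow b + (n-m)\bitstate + m\bitstate$; since $(n-m)\bitstate + m\bitstate \sim n\bitstate$, Axiom~\ref{axiom:processes}(a) together with transitivity gives $a + m\bitstate \rightarrow b + n\bitstate$. If instead $b + (n-m)\bitstate \rightarrow a$, then appending $m\bitstate$ gives $b + n\bitstate \rightarrow a + m\bitstate$. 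Either way the desired dichotomy holds.

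The case $m > n$ is symmetric: applying Axiom~\ref{axiom:demons}(b) in the other orientation with $K = (m-n)\bitstate$, I would obtain either $b \rightarrow a + (m-n)\bitstate$ (append $n\bitstate$ to get $b + n\bitstate \rightarrow a + m\bitstate$) or $a + (m-n)\bitstate \rightarrow b$ (append $n\bitstate$ to get $a + m\bitstate \rightarrow b + n\bitstate$). Finally, the borderline case $m = n$ reduces directly to the hypothesis: since $\fproc{a}{b}$ is possible, either $a \rightarrow b$ or $b \rightarrow a$, and appending $m\bitstate = n\bitstate$ to both sides settles it.

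I expect the only real subtlety to be the one flagged at the outset: one cannot simply pad a single relation such as $a \rightarrow b + p\bitstate$ with bits on both sides to cover all pairs $(m,n)$, because for the ``middle range'' of $n - m$ this leaves uncontrolled excess information on one side. The work is done precisely by Axiom~\ref{axiom:demons}(b), which converts the unknown sign of the (not-yet-constructed) entropy difference between $a$ and $b$ into a \emph{definite} direction for the excess $|n-m|\,\bitstate$; choosing $K = |n-m|\,\bitstate$ and splitting on the sign of $n-m$ is exactly what makes the two cases close, with the remaining manipulations being routine bystander appends and the similarity $(n-m)\bitstate + m\bitstate \sim n\bitstate$.
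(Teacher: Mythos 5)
Your proof is correct and follows essentially the same route as the paper's: split on the sign of $n-m$, apply Axiom~\ref{axiom:demons}(b) with the deficit $\absolute{n-m}\,\bitstate$, then append the remaining bits as bystanders and rearrange. Your explicit appeal to the fact that possibility of $\fproc{a}{b}$ yields both $a \rightarrow b+I$ and $b \rightarrow a+J$ (needed to invoke the demon axiom in either orientation) is a justification the paper leaves implicit, but the argument is the same.
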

\begin{proof}
	If $m=n$, the result is easy.  Suppose that $n > m$, so that $n = m + k$ for 
	positive integer $k$.  By Axiom~\ref{axiom:demons}, either $a \rightarrow b+k \bitstate$
	or $b + k \bitstate \rightarrow a$.  We can then append the information state $m \bitstate$
	to both sides and rearrange the components.  The argument for $m > n$ is exactly
	similar.
\end{proof}
This fact has a corollary that we may state using the $\rightarrow$ relation on processes.
Suppose $\alpha = \fproc{a}{b}$ is a possible singleton process, and let $q,p$ be 
integers with $q > 0$.  Then, either $q \alpha \rightarrow p \bitproc$ 
or $q \alpha \leftarrow p \bitproc$ (so that $q \alpha - p \bitproc$ is either
natural or antinatural) for the bit process $\bitproc = \fproc{r}{\bitstate}$.

Given $\alpha$, therefore, we can define two sets of rational numbers:
\begin{equation}
	L_{\alpha} = \{ p/q : q\alpha \rightarrow p \bitproc \} \qquad
	U_{\alpha} = \{ p/q : q\alpha \leftarrow p \bitproc \}
\end{equation}
where $q > 0$.  Both sets are nonempty and every rational number is in
at least one of these sets.  Furthermore, if $p/q \in U_{\alpha}$ and
$p'/q' \in L_{\alpha}$, we have that $q \alpha \leftarrow p \bitproc$ and
$q' \alpha \rightarrow p' \bitproc$, and so

\begin{equation}
	pq' \bitproc \rightarrow qq' \alpha \rightarrow p'q \bitproc .
\end{equation}
Hence, $(pq' - p'q) \bitproc \rightarrow \zeroprocess$.  Since $\bitproc$
is itself a natural irreversible process, it follows that $pq' - p'q \geq 0$,
and so

\begin{equation}
	\frac{p}{q} \geq \frac{p'}{q'} .
\end{equation}
Every element of $U_{\alpha}$ is an upper bound for $L_{\alpha}$.
It follows that $U_{\alpha}$ and $L_{\alpha}$ form a Dedekind cut
of the rationals, which leads us to the following important result.

\begin{theorem}  \label{theorem:singletonirreversibility}
For $\alpha \in \hat{\procset}_{S}$, define $\irrev(\alpha) = \inf U_{\alpha}
= \sup L_{\alpha}$.  
Then, $\irrev$ is an irreversibility function on $\hat{\procset}_{S}$.
\end{theorem}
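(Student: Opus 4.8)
I need to verify that $\irrev(\alpha) = \inf U_\alpha = \sup L_\alpha$ satisfies the two defining properties of an irreversibility function on $\hat{\procset}_S$: additivity, and the sign condition linking the value of $\irrev$ to the type of the process.

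Let me think about additivity first. For $\alpha, \beta \in \hat\procset_S$, I want $\irrev(\alpha+\beta) = \irrev(\alpha) + \irrev(\beta)$. The cut construction gives $\irrev(\alpha) = \sup L_\alpha$. The natural approach: if $p/q \in L_\alpha$ and $p'/q' \in L_\beta$, I can combine the processes $q\alpha \to p\bitproc$ and $q'\beta \to p'\bitproc$ by scaling to a common denominator and adding, getting $qq'(\alpha+\beta) \to (pq' + p'q)\bitproc$, hence $(pq'+p'q)/(qq') \in L_{\alpha+\beta}$. This shows $L_\alpha + L_\beta \subseteq L_{\alpha+\beta}$ (as a set of sums), giving $\irrev(\alpha)+\irrev(\beta) \le \irrev(\alpha+\beta)$. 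The reverse inequality should come from the analogous argument with upper sets: using $U$ with $q\alpha \leftarrow p\bitproc$, I get $\irrev(\alpha+\beta) \le \irrev(\alpha)+\irrev(\beta)$. Combining the two gives equality.

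Then the sign condition. I want: $\alpha$ natural irreversible $\iff \irrev(\alpha)>0$, etc. Consider $\irrev(\alpha) > 0$. Since $\sup L_\alpha > 0$, there is a positive $p/q \in L_\alpha$, i.e. $q\alpha \to p\bitproc$ with $p,q>0$; since $\bitproc$ is natural and $p>0$, this forces $\alpha$ to be natural (if $\alpha$ were antinatural irreversible we could derive a contradiction). For the reversible case $\irrev(\alpha)=0$: I expect to show $0 \in L_\alpha \cap U_\alpha$ corresponds exactly to $\alpha$ being reversible, using that $q\alpha$ is then both natural and antinatural, so reversible, and the cancellation properties of $\procset_S$ (an Abelian group) let me pass from $q\alpha$ reversible back to $\alpha$ reversible. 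The antinatural case is symmetric via $\irrev(-\alpha)=-\irrev(\alpha)$, which should follow from $L_{-\alpha} = -U_\alpha$.

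The main obstacle I anticipate is the rigorous handling of the strictness in the sign condition—specifically, translating a statement like ``$q\alpha \to p\bitproc$ with $p/q$ arbitrarily close to $\irrev(\alpha)$ from below'' into a clean determination of the type of $\alpha$ itself. The delicate point is that the cut value being strictly positive must rule out $\alpha$ being merely reversible or antinatural, and this requires knowing that $\bitproc$ is \emph{irreversible} (so that $k\bitproc \to \zeroprocess$ forces $k \le 0$, established just before the theorem). I would lean on that fact repeatedly: it is what makes the Dedekind cut nondegenerate and what converts inequalities among the rationals $p/q$ into genuine type information. A secondary technical worry is ensuring the scaling-to-common-denominator step stays inside $\hat\procset_S$ and respects the group structure, but since $\procset_S$ is an Abelian group and possible singleton processes are closed under the relevant operations (via Theorem~\ref{theorem:comparison} and the demonic axioms), this should be routine rather than genuinely hard.
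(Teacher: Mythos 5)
Your additivity argument and your treatment of the strictly positive and strictly negative cases match the paper's proof in substance (the paper even takes a shortcut you miss: since all rationals below $\sup L_{\alpha}$ lie in $L_{\alpha}$ and all rationals above $\inf U_{\alpha}$ lie in $U_{\alpha}$, one gets $0 \in L_{\alpha}$ and $0 \notin U_{\alpha}$ directly when $\irrev(\alpha)>0$, whence $\alpha \rightarrow \zeroprocess$ but $\alpha \nleftarrow \zeroprocess$, i.e., $\alpha$ is natural irreversible, with no detour through a positive $p/q$). But there is a genuine gap in your handling of the case $\irrev(\alpha)=0$. You assert that $0 \in L_{\alpha} \cap U_{\alpha}$ in that case, but a Dedekind cut located at $0$ does not tell you on which side $0$ itself falls: a priori you could have $0 \in L_{\alpha}$ only, i.e., $\alpha$ natural irreversible with $\inf U_{\alpha} = 0$. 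Ruling this out is exactly the content of the strictness of the sign condition, so assuming $0 \in U_{\alpha}$ here is assuming what must be proved. What the cut actually gives you is only that $-1/q \in L_{\alpha}$ and $+1/q \in U_{\alpha}$ for every $q>0$, i.e., $q\alpha + \bitproc$ is natural and $q\alpha - \bitproc$ is antinatural for arbitrarily large $q$; passing from these asymptotic statements to the reversibility of $\alpha$ itself is precisely where the paper invokes the Stability Axiom (Axiom~\ref{axiom:stability}), which your proposal never uses. Your proposed substitute---``cancellation in the Abelian group $\procset_{S}$ to pass from $q\alpha$ reversible back to $\alpha$ reversible''---does not work: the group structure lets you cancel at the level of equivalence classes of formal processes, but it does not by itself show that $qa \rightarrow qb$ implies $a \rightarrow b$; that implication is again a stability statement. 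You also identified the wrong ``delicate point'': the irreversibility of $\bitproc$ is indeed needed, but only to show that the cut is nondegenerate (every element of $U_{\alpha}$ bounds $L_{\alpha}$), which is established before the theorem; the real delicacy is the $\irrev(\alpha)=0$ case, and it cannot be closed without Axiom~\ref{axiom:stability}.
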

\begin{proof}
	First, we must show that $\irrev$ is additive.  
	Suppose $\alpha, \beta \in \hat{\procset}_{S}$.
	If $p/q \in U_{\alpha}$ and $p'/q \in U_{\beta}$,
	then $q (\alpha + \beta) \rightarrow (p + p') \bitproc$,
	and so $(p+p')/q \in U_{\alpha + \beta}$.  	It follows
	that $\irrev(\alpha + \beta) \leq \irrev(\alpha) + \irrev(\beta)$.
	The corresponding argument involving $L_{\alpha}$ and 
	$L_{\beta}$ proves that 
	$\irrev(\alpha + \beta) \geq \irrev(\alpha) + \irrev(\beta)$.
	Thus, $\irrev$ must be additive.
	
	Next, we must show that the value of $\irrev(\alpha)$
	tells us the type of the process $\alpha$.
	If $\irrev(\alpha) > 0$, then $0 \in L_{\alpha}$ but 
	$0 \notin U_{\alpha}$, and so $\alpha \rightarrow \zeroprocess$
	but $\alpha \nleftarrow \zeroprocess$.  That is, $\alpha$
	is natural irreversible.  Likewise, if $\irrev(\alpha) < 0$,
	then $0 \notin L_{\alpha}$ but $0 \in U_{\alpha}$, from which
	we find that $\alpha$ must be antinatural irreversible.
	Finally, if $\irrev(\alpha) = 0$, we find that 
	$q \alpha - \bitproc \rightarrow \zeroprocess$ and
	$q \alpha + \bitproc \leftarrow \zeroprocess$ for
	arbitrarily large values of $q$.  From Axiom~\ref{axiom:stability},
	we may conclude that $\alpha \leftrightarrow \zeroprocess$,
	and so $\alpha$ is reversible.
\end{proof}

Notice that we have arrived at an irreversibility function for possible
singleton processes---those most analogous to the ordinary
processes in Giles or any text on classical thermodynamics---from axioms
about information and processes involving demons
(Axioms~\ref{axiom:information}--\ref{axiom:stability}).
In our view, such ideas are not ``extras'' to be appended 
onto a thermodynamic theory, but are instead central
concepts throughout.  In ordinary thermodynamics, the
possibility of a reversible heat engine can
have implications for processes that do not involve
any heat engines at all.  In the information thermodynamics 
whose axiomatic foundations we are exploring, 
the possibility of a Maxwell's demon has implications
even for situations in which no demon acts.

We now have irreversibility functions for both information
processes and singleton processes.  These are closely
related.  In fact, it is possible to prove the following general
result:  
\begin{theorem}  \label{theorem:irreversibilityonsi}
	If $\alpha \in \hat{\procset}_{S}$ and $\Gamma \in \procset_{I}$,
	then the combined process $\alpha + \Gamma$ is natural
	if and only if $\irrev(\alpha) + \irrev(\Gamma) \geq 0$.
\end{theorem}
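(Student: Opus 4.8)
The plan is to use the bit process $\bitproc$ as a common yardstick against which both $\alpha$ and the information process $\Gamma$ are already calibrated, and then to read off the sign of $\irrev(\alpha) + \irrev(\Gamma)$ from that comparison. Write $\alpha = \eqclass{\fproc{a}{b}}$, $\Gamma = \eqclass{\fproc{I}{J}}$, $s = \irrev(\alpha) = \sup L_{\alpha} = \inf U_{\alpha}$, and $t = \irrev(\Gamma) = \log \numberin{J} - \log \numberin{I}$, so that $\alpha + \Gamma = \eqclass{\fproc{a+I}{b+J}}$ is natural exactly when $a + I \rightarrow b + J$. The first preliminary step is to record a sharp, \emph{exact} comparison between information processes and bit processes. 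Writing $\Delta = -\Gamma = \eqclass{\fproc{J}{I}}$, the process $p\bitproc - q\Delta = p\bitproc + q\Gamma$ equals $\eqclass{\fproc{pr + qI}{p\bitstate + qJ}}$, all of whose eidostates are information states, so Theorem~\ref{theorem:infoarrow} gives
\begin{equation}
	p\bitproc \rightarrow q\Delta \quad\Longleftrightarrow\quad \numberin{I}^{q} \leq 2^{p}\,\numberin{J}^{q} \quad\Longleftrightarrow\quad p \geq -qt .
\end{equation}
For $\alpha$ itself the calibration is the Dedekind cut already constructed: for a rational $p/q$ with $q > 0$, the cut structure of $(L_{\alpha},U_{\alpha})$ makes $p/q < s$ force $p/q \in L_{\alpha}$ (so $q\alpha \rightarrow p\bitproc$) and $p/q > s$ force $p/q \in U_{\alpha}$ (so $p\bitproc \rightarrow q\alpha$).

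For the ``only if'' direction I would argue by contradiction. Suppose $\alpha + \Gamma$ is natural but $s + t < 0$. Choose a rational $p/q$ (with $q > 0$) satisfying $s < p/q < -t$. The left inequality gives $p\bitproc \rightarrow q\alpha$, while naturality of $\alpha + \Gamma$ gives $q\alpha + q\Gamma \rightarrow \zeroprocess$, that is, $q\alpha \rightarrow q\Delta$. Chaining these, $p\bitproc \rightarrow q\alpha \rightarrow q\Delta$, so $p \geq -qt$ by the exact comparison above; but the right inequality says $p < -qt$, a contradiction. Hence $\alpha + \Gamma$ natural implies $s + t \geq 0$. This half is clean and needs no further axioms.

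For the ``if'' direction, assume $s + t \geq 0$ and fix any integer $q \geq 1$. Set $p = \lceil -qt \rceil$, so $p \geq -qt$ and hence $p\bitproc \rightarrow q\Delta$. Because $s + t \geq 0$ we have $-qt \leq qs$, and since $\lceil -qt \rceil - 1 < -qt$ for every real argument, we obtain the \emph{strict} inequality $p - 1 < qs$; therefore $(p-1)/q \in L_{\alpha}$ and $q\alpha \rightarrow (p-1)\bitproc$. Appending one bit of slack,
\begin{equation}
	q\alpha + \bitproc \;\rightarrow\; (p-1)\bitproc + \bitproc \;=\; p\bitproc \;\rightarrow\; q\Delta ,
\end{equation}
so $q\alpha + q\Gamma + \bitproc$ is natural; in eidostate language this reads $q(a+I) + r \rightarrow q(b+J) + \bitstate$, and since $q(a+I) \leftrightarrow q(a+I) + r$ we conclude $q(a+I) \rightarrow q(b+J) + \bitstate$ for every $q \geq 1$.

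The main obstacle is precisely this last passage from scaled copies back to the single process: the relation holds only after multiplying by $q$ and with a leftover bit state, so no naive cancellation is available. This is exactly the situation the Stability Axiom is designed for, and applying Axiom~\ref{axiom:stability} with the fixed information state $J' = \bitstate$ yields $a + I \rightarrow b + J$, i.e.\ $\alpha + \Gamma$ is natural. I expect the delicate points to be the strict-inequality bookkeeping that sidesteps the boundary case $s + t = 0$ (where $s$ may be irrational, so that no exact integer bit level exists), and the observation that the single bit state $\bitstate$ is a \emph{fixed}, bounded information state independent of $q$, so that Stability applies verbatim.
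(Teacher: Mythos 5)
The paper states Theorem~\ref{theorem:irreversibilityonsi} without proof (``it is possible to prove the following general result''), so there is no in-text argument to compare against; judged on its own terms, your proof is correct and complete in both directions. The calibration of $\Gamma$ against $\bitproc$ via Theorem~\ref{theorem:infoarrow}, the use of the Dedekind-cut structure of $(L_{\alpha},U_{\alpha})$ to convert strict rational inequalities into arrow relations, and the final appeal to Stability with the fixed leftover state $\bitstate$ are all sound, and the ``only if'' direction is airtight because the intermediate process $q\alpha$ in the chain $p\bitproc \rightarrow q\alpha \rightarrow q\Delta$ is a singleton process, so $-q\alpha + q\alpha = \zeroprocess$ exactly and transitivity is free.

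The one step you should justify more carefully is the chain $q\alpha + \bitproc \rightarrow p\bitproc \rightarrow q\Delta$ in the ``if'' direction. The induced relation $\rightarrow$ on $\procset$ is \emph{not} obviously transitive when the intermediate process has a non-singleton eidostate: summing the two natural processes leaves the residue $-p\bitproc + p\bitproc = \eqclass{\fproc{p\bitstate + pr}{pr + p\bitstate}}$, which is reversible but is not the zero element of the monoid (no singleton $y$ can absorb $p$ factors of $\bitstate$ under $\doteq$), so one is left needing to cancel the information state $p\bitstate$ from both sides of an arrow. This is repairable in two ways: either invoke the cancellation law for information states that the paper derives from the Stability Axiom at the end of Section~\ref{sec:demons}, or avoid cancellation altogether by composing at the eidostate level in the right order --- first apply $pr + qI \rightarrow p\bitstate + qJ$ with bystander $qa + r$ to \emph{create} the $p$ bit states, then apply $qa + r + p\bitstate \rightarrow qb + \bitstate + pr$ with bystander $qJ$ to consume $p-1$ of them, and finally strip the record state $pr$. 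Either route delivers $q(a+I) + r \rightarrow q(b+J) + \bitstate$ as you claim, after which your application of Axiom~\ref{axiom:stability} goes through verbatim.
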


Since the set $\hat{\procset}_{S}$ of possible singleton processes 
is a subgroup of the Abelian group $\procset_{S}$ of all singleton
processes, we can extend the additive irreversiblity function 
$\irrev$ to all of $\procset_{S}$.  Though $\irrev$ is unique on
the possible set $\hat{\procset}_{S}$, its extension to $\procset_{S}$
is generally not unique.

\section{Components of Content and Entropy}   \label{sec:mechanical}

Our axiomatic theory of information thermodynamics is fundamentally
about the set of eidostates $\eidoset$.  However, the part 
of that theory dealing with the set $\stateset$ of singleton 
eidostates includes many of the concepts and results 
of ordinary axiomatic thermodynamics \cite{Giles1964}.  We have a group
of singleton processes $\procset_{S}$ containing a subgroup
$\hat{\procset}_{S}$ of possible processes, and we have
constructed an irreversibility function $\irrev$ on $\hat{\procset}_{S}$
that may be extended to all of $\procset_{S}$.  From these
we can establish several facts.
\begin{itemize}
	\item  We can construct \emph{components of content}, which
		are the abstract versions of conserved quantities.
		A component of content $Q$ is an additive function
		on $\stateset$ such that, if the singleton process
		$\fproc{a}{b}$ is possible, then $Q(a) = Q(b)$.
		(In conventional thermodynamics, components of
		content include energy, particle number, etc.)
	\item  We can find a \emph{sufficient set} of components
		of content.  The singleton process $\fproc{a}{b}$
		is possible if and only if $Q(a) = Q(b)$ for all
		$Q$ in the sufficient set.
	\item  We can use $\irrev$ to define a \emph{quasi-entropy}
		$\entropy$ on $\stateset$ as follows:
		$\entropy(a) = \irrev(\fproc{a}{2a})$.
		This is an additive function on states in $\stateset$
		such that $\irrev(\fproc{a}{b}) = \entropy(b) - \entropy(a)$.
\end{itemize}
Because the extension of the irreversibility function $\irrev$ from
$\hat{\procset}_{S}$ to all of $\procset_{S}$ is not unique, the
quasi-entropy $\entropy$ is not unique either.  How could 
various quasi-entropies differ?  Suppose 
$\irrev_{1}$ and $\irrev_{2}$ are two different extensions
of the same original $\irrev$, leading to two quasi-entropy
functions $\entropy_{1}$ and $\entropy_{2}$ on $\stateset$.
Then, the difference $Q = \entropy_{1} - \entropy_{2}$ 
is a component of content.  That is, if $\fproc{a}{b}$ is possible,
\begin{eqnarray}
	Q(b) - Q(a) 
	& = & \entropy_{1}(b) - \entropy_{2}(b) - \entropy_{1}(a) + \entropy_{2}(a) \nonumber \\
	& = & \irrev_{1}(\fproc{a}{b}) - \irrev_{2}(\fproc{a}{b}) \nonumber \\
	& = & 0,
\end{eqnarray}
since $\irrev_{1}$ and $\irrev_{2}$ agree on $\hat{\procset}_{S}$,
which contains $\fproc{a}{b}$.

Another idea that we can inherit without alteration is the concept 
of a \emph{mechanical state}.  A mechanical state is a singleton
state that reversibly stores one or more components of content,
in much the same way that we can store energy reversibly
as the work done to lift or lower a weight.  The mechanical state
in this example is the height of the weight.  In Giles's theory \cite{Giles1964},
mechanical states are the subject of an axiom, which we
also adopt:
\begin{axiom}Mechanical states: \label{axiom:mechanical}
	There exists a subset $\mechset \subseteq \stateset$ of 
	\emph{mechanical states} such that:
	\begin{description}
		\item(a)  If $l,m \in \mechset$, then $l+m \in \mechset$.
		\item(a)  For $l,m \in \mechset$, if $l \rightarrow m$
			then $m \rightarrow l$.
	\end{description}	
\end{axiom}
Nothing in this axiom asserts the actual \emph{existence} of
any mechanical state.  It might be that $\mechset = \emptyset$.
Furthermore, the choice of the designated set $\mechset$ 
is not determined solely by the $\rightarrow$ relations among the 
states.  For instance, the set $\recordset$ of record states might 
be included in $\mechset$, or not.  
This explains why the introduction of mechanical states must be
phrased as an axiom, rather than a definition: a complete
specification of the system must include the choice of which
set is to be designated as $\mechset$.
Whatever choice is made for 
$\mechset$, the set $\procset_{M}$ of \emph{mechanical processes} 
(i.e., those equivalent to $\fproc{l}{m}$ for $l,m \in \mechset$) 
will form a subgroup of $\procset_{S}$.

A mechanical state may ``reversibly store'' a component of content $Q$,
but it need not be true that every $Q$ can be stored like this.  We
say that a component of content $Q$ is \emph{non-mechanical} if
$Q(m) = 0$ for all $m \in \mechset$.  For example, we might store
energy by lifting or lowering a weight, but we cannot store particle
number in this way.

Once we have mechanical states and processes, we can give
a new classification of processes.  A process $\Gamma \in \procset$ 
is said to be \emph{adiabatically natural} (\emph{possible}, \emph{reversible},
\emph{antinatural}) if there exists a mechanical process
$\mu \in \procset_{M}$ such that $\Gamma + \mu$ is
natural (possible, reversible, antinatural).  We can also define
the ``adiabatic accessibility'' relation for states in $\stateset$, as
mentioned in Section~\ref{sec:demons}:  $a \prec b$ whenever
the process $\fproc{a}{b}$ is adiabatically natural.

The set $\mechset$ of mechanical states allows us to 
refine the idea of a quasi-entropy into an \emph{entropy},
which is a quasi-entropy $\entropy$ that takes the
value $\entropy(m) = 0$ for any mechanical state $m$.
Such a function is guaranteed to exist.  We end up with
a characterization theorem, identical to a result of Giles \cite{Giles1964},
that summarizes the general thermodynamics of singleton 
eidostates in our axiomatic theory.
\begin{theorem} \label{theorem:singletonthermodynamics}
	There exist an entropy function $\entropy$ and a set of 
	components of content $Q$ on $\stateset$ with the following~properties:
	\begin{description}
		\item(a)  For any $a,b \in \stateset$, $\entropy(a+b) = \entropy(a) + \entropy(b)$.
		\item(b)  For any $a,b \in \stateset$ and component of content $Q$, 
			$Q(a+b) = Q(a)+Q(b)$.
		\item(c)  For any $a,b \in \stateset$, $a \rightarrow b$ if and only if
			$\entropy(a) \leq \entropy(b)$ and $Q(a) = Q(b)$ for every
			component of content $Q$.
		\item(d)  $\entropy(m) = 0$ for all $m \in \mechset$.
	\end{description}
\end{theorem}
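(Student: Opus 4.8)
The plan is to assemble the machinery already developed in Sections~\ref{sec:algebra}--\ref{sec:mechanical}. I have the canonical irreversibility function $\irrev$ on the subgroup $\hat{\procset}_{S}$ of possible singleton processes (Theorem~\ref{theorem:singletonirreversibility}), the extension result of Theorem~\ref{theorem:hahn-banach}, a sufficient set of components of content, and the mechanical-state structure of Axiom~\ref{axiom:mechanical}. The entropy will be $\entropy(a)=\irrev(\fproc{a}{2a})$ for a \emph{carefully chosen} extension of $\irrev$ to all of $\procset_{S}$; properties (a)--(c) then follow from the quasi-entropy relations of Section~\ref{sec:mechanical}, and the real work is forcing property (d).

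The crux is the choice of extension. First I would observe that every possible mechanical process is reversible: by Axiom~\ref{axiom:mechanical}(b), if $l,m\in\mechset$ and $l\rightarrow m$ then $m\rightarrow l$. Hence the canonical $\irrev$ vanishes on the intersection $\hat{\procset}_{S}\cap\procset_{M}$. This compatibility is exactly what lets me prescribe $\irrev=0$ on all of $\procset_{M}$ without clashing with its values on $\hat{\procset}_{S}$. Concretely, on the subgroup $\mathcal{G}_{0}=\hat{\procset}_{S}+\procset_{M}$ I define an additive function by $\gamma+\mu\mapsto\irrev(\gamma)$ for $\gamma\in\hat{\procset}_{S}$ and $\mu\in\procset_{M}$. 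This is well defined: if $\gamma+\mu=\gamma'+\mu'$, then $\gamma-\gamma'=\mu'-\mu$ lies in $\hat{\procset}_{S}\cap\procset_{M}$, where $\irrev=0$, so $\irrev(\gamma)=\irrev(\gamma')$. This function agrees with $\irrev$ on $\hat{\procset}_{S}$ and vanishes identically on $\procset_{M}$. I then invoke Theorem~\ref{theorem:hahn-banach} to extend it from $\mathcal{G}_{0}$ to all of $\procset_{S}$, preserving both features, and I keep the name $\irrev$ for this extension.

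With $\irrev$ fixed, set $\entropy(a)=\irrev(\fproc{a}{2a})$. Additivity (property (a)) and the identity $\irrev(\fproc{a}{b})=\entropy(b)-\entropy(a)$ follow as in Section~\ref{sec:mechanical}, since $\irrev$ is additive on $\procset_{S}$ and $2(a+b)\sim 2a+2b$, so $\fproc{a+b}{2(a+b)}\doteq\fproc{a}{2a}+\fproc{b}{2b}$. For property (d): because $\mechset$ is closed under $+$, the process $\fproc{m}{2m}$ has both endpoints in $\mechset$ and therefore lies in $\procset_{M}$, on which the extended $\irrev$ vanishes; hence $\entropy(m)=\irrev(\fproc{m}{2m})=0$. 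For the components of content I take a sufficient set $\{Q\}$, each additive by definition, giving property (b).

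Property (c) is then a short two-way argument using that on possible processes the extended $\irrev$ equals the canonical one. If $a\rightarrow b$, then $\fproc{a}{b}$ is natural, hence possible, so $Q(a)=Q(b)$ for every component of content, while naturalness gives $\irrev(\fproc{a}{b})=\entropy(b)-\entropy(a)\geq 0$, i.e.\ $\entropy(a)\leq\entropy(b)$. Conversely, if $Q(a)=Q(b)$ for all $Q$ in the sufficient set, then $\fproc{a}{b}$ is possible; combined with $\entropy(a)\leq\entropy(b)$, which forces $\irrev(\fproc{a}{b})\geq 0$, the possible process must be either natural irreversible or reversible, and in both cases $a\rightarrow b$. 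I expect the main obstacle to be the \emph{simultaneous} demands of (c) and (d): a generic Hahn--Banach extension need not vanish on $\mechset$, and the essential point is recognizing that reversibility of possible mechanical processes (Axiom~\ref{axiom:mechanical}(b)) is precisely what makes the prescription $\irrev=0$ on $\procset_{M}$ consistent with the canonical values on $\hat{\procset}_{S}$ before the extension is performed.
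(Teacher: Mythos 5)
Your proof is correct and follows the same route the paper takes (and attributes to Giles): define $\entropy(a)=\irrev(\fproc{a}{2a})$ for a suitable extension of the canonical irreversibility function on $\hat{\procset}_{S}$, and pair it with a sufficient set of components of content. The paper merely asserts that an entropy vanishing on $\mechset$ ``is guaranteed to exist''; your construction on the intermediate subgroup $\hat{\procset}_{S}+\procset_{M}$, justified by the observation that possible mechanical processes are reversible so that $\irrev$ already vanishes on $\hat{\procset}_{S}\cap\procset_{M}$, correctly supplies the one detail the paper leaves implicit.
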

An entropy function is not unique.  Two entropy functions may differ
by a non-mechanical component of content.

The entropy function on $\stateset$ is related to the information
entropy function we found for information states in $\infoset$.  Suppose
we have $a,b \in \stateset$ and $I,J \in \infoset$.  Then,
Theorem~\ref{theorem:irreversibilityonsi} tells us that
$a + I \rightarrow b + J$ only~if 

\begin{equation}
	\entropy(a) + \log \numberin{I} \leq \entropy(b) + \log \numberin{J}.
\end{equation}

Let $\eidoset_{SI}$ represent the set of eidostates that are
similar to a singleton state combined with an information state.
Then, $\entropy(a + I) = \entropy(a) + \log \numberin{I}$ is an
entropy function on $\eidoset_{SI}$.  In the next section, we will
extend the domain of the entropy function even further, to the set
$\uniformset$ of all uniform eidostates.

\section{State Equivalence}  \label{sec:equivalence}

Consider a thought-experiment (illustrated in Figure~\ref{fig:macroaxiom}) 
in which a one-particle gas
starts out in a volume $v_{0}$ and a second thermodynamic
system starts out in one of three states $e_1$, $e_2$ or $e_3$.
We assume that all conserved quantities are the same for
these three states, but they may differ in entropy.  We can formally
describe the overall situation by the eidostate $E + v_{0}$,
where $E = \{ e_{1}, e_{2}, e_{3} \}$ is uniform.

\begin{figure}
\begin{center}
\includegraphics[width=4.5in]{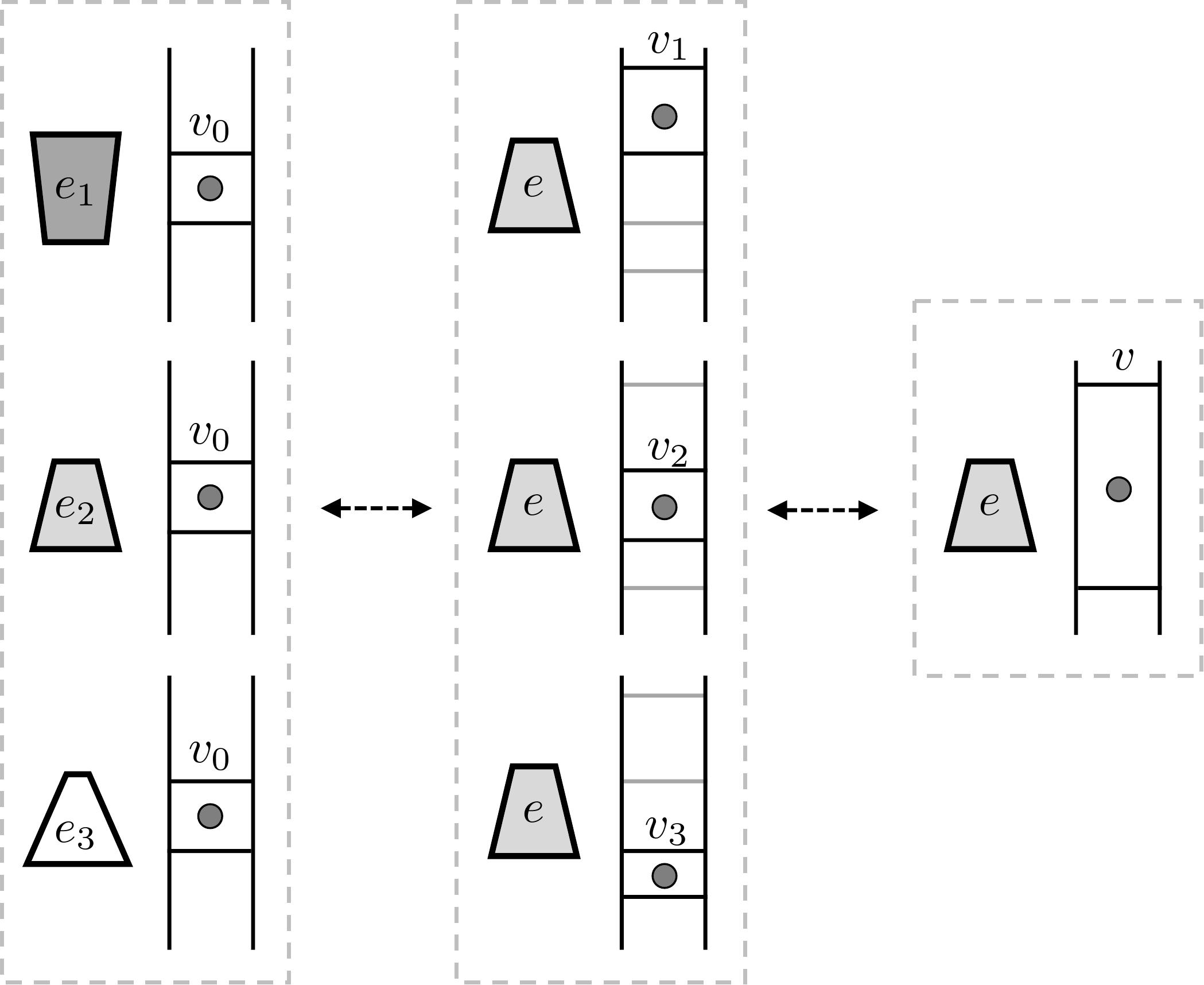}
	\end{center}
	\caption{\label{fig:macroaxiom} A state equivalence
		thought-experiment involving states of an 
		arbitrary thermodynamic system and a one-particle gas.}
\end{figure}

We can reversibly transform each of the $e_{k}$ states to
the same state $e$, compensating for the various changes
in entropy by expanding or contracting the volume occupied
by the gas.  That is, we can have $v_{0} + e_{k} \leftrightarrow v_{k} + e$
for adjacent but non-overlapping volumes $v_{k}$.  
Axiom~\ref{axiom:conditional} indicates that we can write
$E + v_{0} \leftrightarrow e + V$, where $V = \{ v_{1}, v_{2}, v_{3} \}$.

Now, we note that $V$ can itself be reversibly transformed into
a singleton eidostate $v$.  A gas molecule in one of the
sub-volumes (eidostate $V$) can be turned into a molecule
in the whole volume (eidostate $e$) by removing internal
partitions between the sub-volumes; and when we re-insert
these partitions the particle is once again in just one of 
sub-volumes.  Thus, $V \leftrightarrow v$.  To summarize,
we have 

\begin{equation}
	E + v_{0} \leftrightarrow e + v .
\end{equation}

The uniform eidostate $E$, taken together with the gas state $v_{0}$,
can be reversibly transformed into the singleton state $v + e$.
We call this a \emph{state equivalence} for the uniform eidostate $E$.
By choosing the state $e$ properly (say, by letting $e = e_{k}$ 
for some $k$), we can also guarantee that the volume $v$ is 
larger than $v_{0}$, so that $v_{0} \rightarrow v$ by free
expansion.

This discussion motivates our final axiom, which states that 
this type of reversible transformation is always possible for
a uniform eidostate.
\begin{axiom}State equivalence:  \label{axiom:equivalence}
	If $E$ is a uniform eidostate then there exist states $e,x,y \in \stateset$
	such that $x \rightarrow y$ and $E + x \leftrightarrow e + y$.
\end{axiom}

Axiom~\ref{axiom:equivalence} closes a number of gaps in our theory.
For example, our previous axioms (Axioms \ref{axiom:eidostates}--\ref{axiom:mechanical}) 
do not by themselves guarantee than \emph{any} state in $\stateset$ has 
a nonzero entropy.  With the new axiom, 
however, we can prove that such states exist. 
The bit state $\bitstate$, which is uniform, has some state equivalence
given by $\bitstate + x \leftrightarrow e + y$.  Since the eidostates
on each side of this relation are in $\eidoset_{SI}$, we can determine
the entropies on each side.  We find that

\begin{equation}
	1 + \entropy(x) = \entropy(e) + \entropy(y) .
\end{equation}
It follows that at least one of the states $e,x,y$ must have 
$\entropy \neq 0$.

State equivalence also allows us to define the entropy of any uniform
eidostate $E \in \uniformset$.  If $E + x \leftrightarrow e + y$ then we let

\begin{equation}
	\entropy(E) = \entropy(e) + \entropy(y) - \entropy(x) .
\end{equation}
We must first establish that this expression is well-defined.
If we have two state equivalences for the same $E$,
so that $E + x \leftrightarrow e + y$ and $E + x' \leftrightarrow e' + y'$,
then

\begin{equation}
	(e + y) + x' \leftrightarrow E + (x + x') \leftrightarrow (e' + y') + x,
\end{equation}
from which it follows that

\begin{equation}
	\entropy(e) + \entropy(y) - \entropy(x) = \entropy(e') + \entropy(y') - \entropy(x').
\end{equation}
Thus, our definition for $\entropy(E)$ does not depend on our
choice of state equivalence for $E$.

Is $\entropy$ an entropy function on $\uniformset$?
It is straightforward to show that $\entropy$ is additive on $\uniformset$
and that $\entropy(m) = 0$ for any mechanical state $m$.  It remains
to show that, for any $E,F \in \uniformset$ with $\fproc{E}{F}$ possible,
$E \rightarrow F$ if and only if $\entropy(E) \leq \entropy(F)$.  We will
use the state equivalences $E + x \leftrightarrow e + y$ and 
$F + w \leftrightarrow f + z$.

Suppose first that $E \rightarrow F$.  Then, $E + (x+w) \rightarrow F + (x+w)$
and so

\begin{equation}
	(e+y) + w \leftrightarrow (E+x) + w \rightarrow (F+w) + x \leftrightarrow (f+z) + x.
\end{equation}
From this, it follows that

\begin{equation}
	\entropy(e) + \entropy(y) - \entropy(x) \leq \entropy(f) + \entropy(z) - \entropy(w),
\end{equation}
and hence $\entropy(E) \leq \entropy(F)$.

We can actually extract one more fact from this argument.  If we assume
that $\fproc{E}{F}$ is possible, it must also be true that
$\fproc{(e+y)+w}{(f+z)+x}$ is a possible singleton process.
If we now suppose that $\entropy(E) \leq \entropy(F)$, we know that
$\entropy((e+y)+w) \leq \entropy((f+z)+x)$ and thus

\begin{equation}
	E+(x+w) \leftrightarrow (e+y)+w \rightarrow (f+z)+x \leftrightarrow F + (x+w).
\end{equation}
Therefore, $E \rightarrow F$, as desired.

We have extended the entropy $\entropy$ to uniform eidostates.
It is even easier to extend any component of content function $Q$
to these states.  If $E \in \uniformset$, then any $e_{1},e_{2} \in E$
must have $Q(e_{1}) = Q(e_{2})$, since $e_{1} \leftorright e_{2}$.  
Thus, we can define $Q(E) = Q(e_{k})$ for any $e_{k} \in E$.
This is additive because the elements of $E+F$ are combinations
$e_{k}+f_{j}$ of states in $E$ and $F$.
Furthermore, suppose we have a state equivalence $E + x \leftrightarrow e + y$.
Since we assume $x \rightarrow y$ in a state equivalence, $Q(x) = Q(y)$.  
By the conditional process axiom
(Axiom~\ref{axiom:conditional}) we know that $e_{k} + x \rightarrow e + y$
for any $e_{k} \in E$.  It follows that $Q(e) = Q(e_{k}) = Q(E)$.

Now, let $E_{1}$ and $E_{2}$ be uniform eidostates with
state equivalences $E_{k} + x_{k} \leftrightarrow e_{k} + y_{k}$.
Suppose further that $Q(E_{1}) = Q(E_{2})$ for every
component of content $Q$.  We know that 
$Q(x_{1}) = Q(y_{1})$, $Q(x_{2}) = Q(y_{2})$ and
$Q(e_{1}) = Q(e_{2})$ for every component of content.
Thus
\begin{equation}
	E_{1} + (x_{1} + x_{2}) \leftrightarrow (e_{1} + y_{1}) + x_{2}
		\leftorright (e_{2} + y_{2}) + x_{1} \leftrightarrow
		E_{2} + (x_{1} + x_{2}) .
\end{equation}
It follows that $E_{1} \leftorright E_{2}$, i.e., that $\fproc{E_{1}}{E_{2}}$
is a possible eidostate process.

We have therefore extended Theorem~\ref{theorem:singletonthermodynamics}
to all uniform eidostates.  We state the new result here.
\begin{theorem}[Uniform eidostate thermodynamics] \label{theorem:uniformthermodynamics}
	There exist an entropy function $\entropy$ and a set of 
	components of content $Q$ on $\uniformset$ with the following properties:
	\begin{description}
		\item(a)  For any $E,F \in \uniformset$, $\entropy(E+F) = \entropy(E) + \entropy(F)$.
		\item(b)  For any $E,F \in \uniformset$ and component of content $Q$, 
			$Q(E+F) = Q(E)+Q(F)$.
		\item(c)  For any $E,F \in \uniformset$, $E \rightarrow F$ if and only if
			$\entropy(E) \leq \entropy(F)$ and $Q(E) = Q(F)$ for every
			component of content~$Q$.
		\item(d)  $\entropy(m) = 0$ for all $m \in \mechset$.
	\end{description}
\end{theorem}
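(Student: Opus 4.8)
The plan is to assemble the pieces already established in the discussion preceding the statement, checking each of the four properties for the entropy $\entropy$ (defined on $\uniformset$ via state equivalence) and for the extended components of content $Q$ (defined by $Q(E) = Q(e_k)$ for any $e_k \in E$). First I would record that both functions are well-defined: $\entropy(E)$ is independent of the chosen state equivalence (already shown above), and $Q(E)$ is independent of the representative $e_k \in E$ because uniformity forces $e_k \leftorright e_j$ and hence $Q(e_k) = Q(e_j)$. I would also note that on singleton eidostates these definitions reduce to the singleton entropy and components of Theorem~\ref{theorem:singletonthermodynamics}, using the trivial state equivalence $a + x \leftrightarrow a + x$ with $e = a$, $y = x$; this consistency is what lets me quote the singleton theorem for the remaining parts.

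For additivity (parts (a) and (b)), given state equivalences $E + x \leftrightarrow e + y$ and $F + w \leftrightarrow f + z$, I would form the combined equivalence $(E+F) + (x+w) \leftrightarrow (e+f) + (y+z)$, which is genuine since $x \rightarrow y$ and $w \rightarrow z$ give $x + w \rightarrow y + z$. Evaluating the definition of $\entropy(E+F)$ and invoking additivity of the singleton entropy then yields $\entropy(E+F) = \entropy(E) + \entropy(F)$. Part (b) is more direct, since the elements of $E + F$ are the pairs $e_k + f_j$ and $Q$ is additive on singletons. Part (d) follows at once from the singleton case: a mechanical state is a singleton, the extended entropy restricts to the singleton entropy there, and that vanishes on $\mechset$ by Theorem~\ref{theorem:singletonthermodynamics}(d).

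The substance of the theorem is the biconditional (c), and here I would reuse the two arguments already given just above the statement. For the forward direction, assuming $E \rightarrow F$, appending $x + w$ and substituting the state equivalences produces a natural singleton process $\fproc{(e+y)+w}{(f+z)+x}$; the singleton entropy inequality then gives $\entropy(E) \leq \entropy(F)$, while conservation of $Q$ on possible singleton processes together with $Q(x) = Q(y)$ and $Q(w) = Q(z)$ forces $Q(e) = Q(f)$, i.e.\ $Q(E) = Q(F)$. For the converse, assuming $\entropy(E) \leq \entropy(F)$ and $Q(E) = Q(F)$ for every $Q$, I would first use the component-of-content chain (already exhibited) to conclude that $\fproc{E}{F}$ is possible, then run the entropy argument in reverse to upgrade ``possible'' to ``natural,'' obtaining $E \rightarrow F$.

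The main obstacle will be the converse of (c): it is the only step that is not a routine translation of singleton facts, because the entropy comparison can only be applied after possibility of $\fproc{E}{F}$ has been secured. The delicate point is that possibility for uniform eidostates must be deduced from equality of \emph{all} components of content, which in turn relies on Theorem~\ref{theorem:comparison} and Axiom~\ref{axiom:conditional} to chain the singleton comparison $\fproc{(e_1+y_1)+x_2}{(e_2+y_2)+x_1}$ through a common intermediate. I would therefore devote most of the writeup to making this chaining explicit and to confirming that the extended $\entropy$ and $Q$ genuinely restrict to their singleton counterparts, so that Theorem~\ref{theorem:singletonthermodynamics} can be applied verbatim; the remaining parts are then bookkeeping.
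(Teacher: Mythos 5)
Your proposal is correct and follows essentially the same route as the paper, whose proof of this theorem is precisely the discussion preceding it in Section~\ref{sec:equivalence}: define $\entropy(E)$ and $Q(E)$ via state equivalences, check well-definedness, get additivity and the mechanical-state normalization from the singleton theory, prove the forward direction of (c) by appending $x+w$, and prove the converse by first securing possibility of $\fproc{E}{F}$ from equality of all components of content and then applying the entropy inequality to the resulting singleton process. You correctly identify the converse of (c) as the one non-routine step, and your filling in of the additivity computation is a detail the paper leaves as ``straightforward.''
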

The set $\uniformset$ of uniform eidostates includes the singleton
states in $\stateset$, the information states in $\infoset$, all
combinations of these, and perhaps many other states as well.
(Non-uniform eidostates in $\eidoset$ \emph{might} exist, as we will see
in the model we discuss in Section~\ref{sec:model1},
but their existence cannot be proved from our axioms.)
The type of every process involving uniform eidostates can be determined
by a single entropy function (which must not decrease) and a 
set of components of content (which must be conserved).

\section{Entropy for Uniform Eidostates}  \label{sec:uniformentropy}

We have extended the entropy function from singleton states and
information states to all uniform eidostates.  It turns out that this 
extension is unique.  The following theorem and its corollaries
actually allow us to compute the entropy of any $E \in \uniformset$
from the entropies of the states contained in $E$.
\begin{theorem} \label{theorem:decomposeentropy}
	Suppose $E \in \uniformset$ is a disjoint union of uniform eidostates $E_{1}$ and
	$E_{2}$.  Then
	
	\begin{equation}
		\entropy(E) = \entropy(E_{1} \cup E_{2}) = \log \left ( 2^{\entropy(E_{1})} + 2^{\entropy(E_{2})} \right ) .
	\end{equation}
\end{theorem}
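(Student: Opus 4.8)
The plan is to pin down the value of $\entropy(E)$ by trapping it between the entropies of calibration eidostates whose values we can read off directly, built from information states of adjustable size, and to respect the disjoint-union structure by combining pieces with the conditional-process axiom. The guiding principle is that, among uniform eidostates of a fixed content, Theorem~\ref{theorem:uniformthermodynamics}(c) makes $\entropy$ the sole arbiter of the $\rightarrow$ relation, so once I produce the right natural processes the entropy inequalities follow for free.

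First I would record a structural fact. Since $E = E_{1}\cup E_{2}$ is uniform, any $a\in E_{1}$ and $b\in E_{2}$ satisfy $a\leftorright b$, so by Theorem~\ref{theorem:uniformthermodynamics}(c) they agree on every component of content. Hence there is one value $q$ with $Q(E_{1})=Q(E_{2})=Q(E)=q$ for every component of content $Q$. I would fix a reference singleton $c\in E$, so that $Q(c)=q$, set $\gamma=\entropy(c)$, and abbreviate $s_{i}=\entropy(E_{i})$. Then each eidostate $c+L$, for an information state $L$, has content $q$ and entropy $\gamma+\log\numberin{L}$, and $E_{i}\rightarrow c+L$ holds exactly when $\gamma+\log\numberin{L}\geq s_{i}$.

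To turn this into a statement about the union I would attach a common information state $K$ to everything and invoke Axiom~\ref{axiom:conditional}(b). Writing $E+K=(E_{1}+K)\cup(E_{2}+K)$ as a disjoint union of uniform pieces, I would pick disjoint information states $L_{1},L_{2}$ with $\numberin{L_{i}}=\lceil\numberin{K}\,2^{s_{i}-\gamma}\rceil$, check that $\entropy(E_{i}+K)=s_{i}+\log\numberin{K}\leq\gamma+\log\numberin{L_{i}}=\entropy(c+L_{i})$, and conclude $E_{i}+K\rightarrow c+L_{i}$ for each $i$. Because $(c+L_{1})\cup(c+L_{2})=c+(L_{1}\cup L_{2})$, the conditional-process axiom assembles these into $E+K\rightarrow c+(L_{1}\cup L_{2})$, giving $\entropy(E)+\log\numberin{K}\leq\gamma+\log(\numberin{L_{1}}+\numberin{L_{2}})$. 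The symmetric construction with $\numberin{L_{i}}=\lfloor\numberin{K}\,2^{s_{i}-\gamma}\rfloor$ produces the reverse process $c+(L_{1}\cup L_{2})\rightarrow E+K$ and the matching lower bound.

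The one real obstacle is that these bounds involve integer information-state sizes, so for a single $K$ they bracket $\entropy(E)$ only up to ceiling/floor rounding. I would dissolve this by amplification: dividing by $\numberin{K}$ turns the upper bound into $\entropy(E)\leq\gamma-\log\numberin{K}+\log(\lceil\numberin{K}\,2^{s_{1}-\gamma}\rceil+\lceil\numberin{K}\,2^{s_{2}-\gamma}\rceil)$, with the analogous lower bound, and since $\lceil\numberin{K}\,x\rceil/\numberin{K}\to x$ and $\lfloor\numberin{K}\,x\rfloor/\numberin{K}\to x$ as $\numberin{K}\to\infty$ (information states of every size exist), both bounds converge to $\gamma+\log(2^{s_{1}-\gamma}+2^{s_{2}-\gamma})=\log(2^{s_{1}}+2^{s_{2}})$. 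As $\entropy(E)$ is a fixed real number trapped between quantities with this common limit, the equality $\entropy(E)=\log(2^{\entropy(E_{1})}+2^{\entropy(E_{2})})$ follows exactly, with no appeal to the Stability Axiom needed. The only bookkeeping to verify along the way is that $E+K$, $c+L$, and all their pieces are uniform, so that Theorem~\ref{theorem:uniformthermodynamics} and Axiom~\ref{axiom:conditional} apply, and that the chosen information states can be taken pairwise disjoint; the former holds because combining uniform eidostates with record-state information preserves comparability, and one need only take $\numberin{K}$ large enough that the floor sizes are at least $1$.
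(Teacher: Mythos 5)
Your proposal is correct and follows essentially the same strategy as the paper's proof: sandwich $\entropy(E)$ between the entropies of reference eidostates of the form (fixed state) $+$ (information state of integer size), assemble the two pieces with Axiom~\ref{axiom:conditional}(b), and amplify by a large information state so the floor/ceiling rounding error vanishes in the limit. The only differences are cosmetic --- you calibrate against a singleton $c \in E$ where the paper uses the least-entropy eidostate among $E_1$, $E_2$, $E$, and you phrase the conclusion as a squeeze to a common limit where the paper bounds $\absolute{\Delta(E_1,E_2)}$ by $2/(m_1+m_2)$ --- so nothing further is needed.
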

\begin{proof}
	$E$ and $E_{k}$ ($k=1,2$) all have equal components of content.
	Define
	
	\begin{equation}
		\Delta(E_{1},E_{2}) 
		= \entropy(E) - \log \left ( 2^{\entropy(E_{1})} + 2^{\entropy(E_{2})} \right ) .
	\end{equation}
	Note that, if we replace $E_{k}$ by $E_{k}' = E_{k} + I$ for $I \in \infoset$, then
	these new eidostates are still disjoint and
	
	\begin{equation}
		E' = E_{1}' \cup E_{2}' = 
		\left ( E_{1} + I \right ) \cup \left ( E_{2} + I \right ) = \left ( E + I \right ) .
	\end{equation}
	These eidostates have the same components of content as the original $E$.  Furthermore,
	\begin{eqnarray}
		\Delta(E_{1}',E_{2}') 
		& = & \Delta ( E_{1}+I , E_{2}+I ) \nonumber \\
		& = & \entropy(E+I) - \log \left ( 2^{\entropy(E_{1}+I)} + 2^{\entropy(E_{2}+I)} \right ) 
			\nonumber \\
		& = & \entropy(E) + \log \numberin{I} 
			- \log \left ( 2^{\log \numberin{I}} \left ( 2^{\entropy(E_{1})} + 2^{\entropy(E_{2})} \right ) 
			\right ) \nonumber \\
		& = & \Delta (E_{1},E_{2}) .
	\end{eqnarray}

	We can find a uniform eidostate $E_{0}$ with the same components of content such that 
	$S_{0} = \entropy(E_{0})$ is less than
	or equal to $\entropy(E_{1})$, $\entropy(E_{2})$ and $\entropy(E)$.  (It suffices to pick
	$E_{0}$ to be the state of smallest entropy among $E_{1}$, $E_{2}$ and $E$.)  Then, there exist
	integers $m_{k} \geq 1$ such that
	\begin{itemize}
		\item  There are disjoint information states $J_{k}$ containing
			$m_{k}$ record states.
		\item  There are disjoint information states $J_{k}^{\ast}$ containing
			$m_{k} + 1$ record states.
		\item  $J = J_{1} \cup J_{2}$ and $J^{\ast} = J_{1}^{\ast} \cup J_{2}^{\ast}$ have
			$m_{1}+m_{2}$ and $m_{1}+m_{2}+2$ record states, respectively.
		\item  We have
			\begin{equation}
				S_{0} + \log m_{k} \leq \entropy(E_{k}) < S_{0} + \log (m_{k}+1) .
			\end{equation}
	\end{itemize}
	That is, we choose $m_{k}$ so that $\entropy(E_{k}) - S_{0} \geq 0$ is
	between $\log(m_{k})$ and $\log(m_{k}+1)$.  To put it more simply,
	$m_{k} = \lfloor 2^{\entropy(E_{k}) - S_{0}} \rfloor$.
	
	Once we have $m_{k}$, we can write that
	
	\begin{equation}
		2^{S_{0}} \cdot m_{k} \leq 2^{\entropy(E_{k})} < 2^{S_{0}} \cdot (m_{k} + 1).
	\end{equation}
	Adding these inequalities for $k=1,2$ and taking the logarithm yields
	
	\begin{equation}
		S_{0} + \log (m_{1}+m_{2}) 
			\leq \log \left ( 2^{\entropy(E_{1})} + 2^{\entropy(E_{2})} \right )
			< S_{0} + \log (m_{1} + m_{2}+2) .
	\end{equation}	
	How far apart are the two ends of this chain of inequalities?  Here, is a 
	useful fact about base-2 logarithms:  If $n \geq 1$, 
	then $\log (n+2) < \log(n) + 2/n$.  This implies
	
	\begin{equation}
		S_{0} + \log (m_{1}+m_{2}) 
			\leq \log \left ( 2^{\entropy(E_{1})} + 2^{\entropy(E_{2})} \right )
			< S_{0} + \log (m_{1} + m_{2}) + \frac{2}{m_{1}+m_{2}} . 
	\end{equation}
	The two ends of the inequality differ by less than $2/(m_{1}+m_{2})$.
	
	We can get another chain of inequalities by applying Axiom~\ref{axiom:conditional}
	about conditional processes.  Since all of our uniform eidostates have the same
	components of content, we know that 
	
	\begin{equation}
		E_{0} + J_{k} \rightarrow E_{k} \rightarrow E_{0} + J_{k}^{\ast} .
	\end{equation}
	From the axiom, we can therefore say
	
	\begin{equation}
		E_{0} + J \rightarrow E \rightarrow E_{0} + J^{\ast} 
	\end{equation}
	which implies that
	\begin{eqnarray}
		S_{0} + \log(m_{1}+m_{2}) \leq \entropy(E) & \leq & S_{0} + \log(m_{1}+m_{2}+2) \nonumber \\
			& < & S_{0} + \log(m_{1}+m_{2}) + \frac{2}{m_{1}+m_{2}} .
	\end{eqnarray}
	We have two quantities that lie in the same interval.  Their separation is
	therefore bounded by the interval width---i.e., less than
	$2/(m_{1}+m_{2})$.  Therefore,
	\begin{eqnarray}
		\absolute{\Delta(E_{1},E_{2})} 
		& = & \absolute{\entropy(E) - \log \left ( 2^{\entropy(E_{1})} + 2^{\entropy(E_{2})} \right )}
			\nonumber \\
		& < & \frac{2}{m_{1}+m_{2}} .
	\end{eqnarray}
	
	How big are the numbers $m_{k}$?  We can make such numbers as large as we like
	by considering instead the eidostates $E_{k}' = E_{k} + I$, where $I$ is an information state.
	As we have seen, $\Delta(E_{1}',E_{2}')=\Delta(E_{1},E_{2})$.
	Given any $\epsilon > 0$, we can choose $I$ so that
	
	\begin{equation}
		m_{k}' = \left \lfloor  2^{\entropy(E_{k}') - S_{0}} \right \rfloor 
			= \left \lfloor  \numberin{I} \cdot 2^{\entropy(E_{k}) - S_{0}} \right \rfloor > \frac{1}{\epsilon} .
	\end{equation}
	Then, $2/(m_{1}'+m_{2}') < \epsilon$, and so
	
	\begin{equation}
		\absolute{\Delta(E_{1},E_{2})} = \absolute{\Delta(E_{1}',E_{2}')} < \epsilon .
	\end{equation}
	Since this is true for any $\epsilon > 0$, we must have $\Delta(E_{1},E_{2}) = 0$, and so
	
	\begin{equation}
		\entropy(E) = \log \left ( 2^{\entropy(E_{1})} + 2^{\entropy(E_{2})} \right ),
	\end{equation}
	as desired.
\end{proof}

Theorem~\ref{theorem:decomposeentropy} has a corollary, which we 
obtain by applying the theorem inductively:
\begin{theorem}  \label{theorem:entropyformula}
If $E$ is a uniform eidostate,

\begin{equation}
	\entropy(E) = \log \left (  \sum_{e_{k} \in E} 2^{\entropy(e_{k})}  \right ).
\end{equation}
\end{theorem}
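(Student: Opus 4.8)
The plan is to prove the formula by induction on the number of states $\numberin{E}$, with Theorem~\ref{theorem:decomposeentropy} serving as the inductive engine. For the base case $\numberin{E} = 1$, we have a singleton $E = \{ e_{1} \}$, so that $\entropy(E) = \entropy(e_{1}) = \log 2^{\entropy(e_{1})}$, which is exactly the claimed formula with a single summand.

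For the inductive step, I would assume the formula holds for every uniform eidostate with $n$ states and consider a uniform eidostate $E = \{ e_{1}, \ldots, e_{n+1} \}$ with $n+1$ states. The natural move is to split $E$ as the disjoint union of $E_{1} = \{ e_{1}, \ldots, e_{n} \}$ and $E_{2} = \{ e_{n+1} \}$. Before invoking Theorem~\ref{theorem:decomposeentropy}, its hypotheses must be checked: both pieces need to be \emph{uniform} eidostates. By Axiom~\ref{axiom:eidostates}(c), every nonempty subset of $E$ is an eidostate, so $E_{1}$ and $E_{2}$ are eidostates; and since every pair of states in $E$ is connected by $\leftorright$, the same holds a fortiori for any pair drawn from a subset, so $E_{1}$ and the singleton $E_{2}$ are both uniform. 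This verification is the one point that genuinely requires attention, though it follows immediately from the definition of uniformity.

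With the hypotheses in place, Theorem~\ref{theorem:decomposeentropy} gives $\entropy(E) = \log\left( 2^{\entropy(E_{1})} + 2^{\entropy(E_{2})} \right)$. The inductive hypothesis applied to $E_{1}$ yields $2^{\entropy(E_{1})} = \sum_{k=1}^{n} 2^{\entropy(e_{k})}$, while $2^{\entropy(E_{2})} = 2^{\entropy(e_{n+1})}$. Substituting these into the decomposition formula and absorbing the last term into the sum gives $\entropy(E) = \log\left( \sum_{k=1}^{n+1} 2^{\entropy(e_{k})} \right)$, completing the step; the induction then establishes the formula for all uniform eidostates.

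I expect no serious obstacle here, since the entire mathematical content of the result already resides in Theorem~\ref{theorem:decomposeentropy}, and the induction itself is routine. The only care needed is the bookkeeping that each intermediate subset arising in the decomposition is again uniform, so that the decomposition theorem legitimately applies at every stage of the recursion.
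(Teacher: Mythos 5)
Your proposal is correct and follows exactly the route the paper intends: the paper states that Theorem~\ref{theorem:entropyformula} is obtained ``by applying the theorem inductively,'' i.e., by repeated use of Theorem~\ref{theorem:decomposeentropy}, which is precisely your induction on $\numberin{E}$ peeling off one singleton at a time. Your explicit check that the pieces $E_{1}$ and $E_{2}$ are themselves uniform eidostates (via Axiom~\ref{axiom:eidostates}(c) and the heredity of uniformity under subsets) is the only nontrivial bookkeeping, and you handle it correctly.
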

The entropy of any uniform eidostate is a straightforward function of the entropies
of the states contained therein.  If $E$ contains more than one state,
we notice that $\entropy(E) > \entropy(e_{k})$ for any $e_{k} \in E$.
It follows that $\fproc{e_{k}}{E}$ is a natural irreversible process.

To take a simple example of Theorem~\ref{theorem:entropyformula}, 
consider the entropy of an information state.  
Every record state $r$ has $\entropy(r) = 0$.
Thus, for $I \in \infoset$,
\begin{equation}
	\entropy(I) = \log \left (  \sum_{r_{k} \in I} 2^{0}  \right ) = \log \numberin{I} ,
\end{equation}
as we have already seen.

\section{Probability}  \label{sec:probability}

An eidostate in $\eidoset$ represents a state of knowledge of 
a thermodynamic agent.  It is, as we have said, a simple
list of possible states, without any assignment of probabilities to them.
If the agent is to use probabilistic reasoning, then it needs
to assign conditional probabilities of the form $P(A|B)$ where
$A,B \in \eidoset$.

The entropy formula in Theorem~\ref{theorem:entropyformula}
allows us to make such an assignment based on the entropy itself,
provided the eidostate conditioned upon is uniform.
Suppose $E \in \uniformset$ and let $a \in \stateset$.  Then,
the \emph{entropic probability} of $a$ conditioned on $E$ is

\begin{equation}  \label{eq:entropicprob}
	P(a|E) = \left \{  \begin{array}{cl} 
		\displaystyle \frac{2^{\entropy(a)}}{2^{\entropy(E)}} = 2^{\entropy(a)-\entropy(E)}
			& \quad a \in E \\[2ex]
		0 & \quad a \notin E
	\end{array} \right .  .
\end{equation}
Clearly, $0 \leq P(a|E) \leq 1$ and $\displaystyle \sum_{a} P(a|E) = 1$.
It is worth noting that, although $E$ is ``uniform'' (in the sense that
all $a \in E$ have exactly the same conserved components of content),
the probability distribution $P(a|E)$ is {\em not} uniform, but assigns
a higher probability to states of higher entropy.

We can generalize entropic probabilities a bit further.
Let $A$ be any subset of $\stateset$ (be it an eidostate or not)
and $E \in \uniformset$.  Then, $A \cap E$
is either a uniform eidostate or the empty set $\emptyset$.  
If we formally assign $\entropy(\emptyset) = - \infty$, then
both $E$ and $A \cap E$ have well-defined entropies.
Then, we define

\begin{equation}
	p(A|E) = \sum_{a \in A} P(a|E) = 
	\frac{2^{\entropy(A \cap E)}}{2^{\entropy(E)}} 
		= 2^{\entropy(A \cap E) - \entropy(E)}.
\end{equation}
Obviously, $P(E|E) = 1$.  Now, consider two disjoint sets
$A$ and $B$ along with $E \in \uniformset$.  
The set $(A \cup B) \cap E$ is a disjoint union 
of uniform eidostates (or empty sets) $A \cap E$
and $B \cap E$.  Thus,
\begin{eqnarray}
	P(A \cup B | E) 
	& = & \frac{2^{\entropy((A \cup B) \cap E)}}{2^{\entropy(E)}} \nonumber \\
	& = & \frac{2^{\entropy(A \cap E)} + 2^{\entropy(B \cap E)}}{2^{\entropy(E)}} \nonumber \\[1ex]
	& = & P(A|E) + P(B|E) ,
\end{eqnarray}
in accordance with the rules of probability.
We also have the usual rule for conditional probabilities.
Suppose $A,B \subseteq \stateset$ and $E \in \uniformset$
such that $A \cap E \neq \emptyset$.  Then

\begin{equation}
	P(B|A \cap E) = \frac{2^{\entropy(B \cap (A \cap E))}}{2^{\entropy(A \cap E)}}
		= \frac{P(B \cap A | E)}{P(A | E)} .
\end{equation}

The entropy formula in Theorem~\ref{theorem:entropyformula} and the 
probability assignment in Equation~\ref{eq:entropicprob} call to mind familiar
ideas from statistical mechanics.  According to Boltzmann's formula, the
entropy is $\entropy = \log \Omega$, where $\Omega$ is (depending on
the context) the number (or phase space volume or Hilbert space dimension) of the
microstates consistent with macroscopic data about a system.  In the
microcanonical ensemble, a uniform probability distribution is assigned to
these microstates.  In an eidostate $E$ comprising non-overlapping  macrostates
states $\{ e_{1}, e_{2}, \ldots \}$, we would therefore expect 
$\Omega(E) = \Omega(e_{1}) + \Omega(e_{2}) + \ldots$, and the 
probability of state $e_{k}$ should be proportional to $\Omega(e_{k})$.
However, in our axiomatic system Theorem~\ref{theorem:entropyformula}
and Equation~(\ref{eq:entropicprob}) do not arise from any assumptions
about microstates, but solely from the ``phenomenological'' 
$\rightarrow$ relation among eidostates in $\eidoset$.

Even though the entropy function $\entropy$ is not unique, the entropic
probability assignment is unique.  Suppose $\entropy_{1}$ and $\entropy_{2}$
are two entropy functions for the same states and processes.  Then,
as we have seen, the difference $\entropy_{1} - \entropy_{2}$ is a 
component of content.  All of the states within a uniform eidostate $E$
(as well as $E$ itself) have the same values for all components of
content.  That is, $\entropy_{1}(a) - \entropy_{2}(a) = \entropy_{1}(E) - \entropy_{2}(E)$
for any $a \in E$.  Thus,

\begin{equation}
	P_{1}(a|E) = 2^{\entropy_{1}(a)-\entropy_{1}(E)} 
		= 2^{\entropy_{2}(a)-\entropy_{2}(E)} = P_{2}(a|E) .
\end{equation}
(Both probabilities are zero for $a \notin E$, of course.)

The entropic probability assignment is not the only possible
probability assignment, but it does have a number of 
remarkable properties.  For instance, 
suppose $E$ and $F$ are two uniform eidostates.  If we 
prepare them independently, we have the combined eidostate $E + F$
and the probability of some particular state $x+y$ is

\begin{equation}
	P(x+y | E+F) = \frac{2^{\entropy(x)+\entropy(y)}}{2^{\entropy(E) + \entropy(F)}}
		= P(x|E) \, P(y|F) ,
\end{equation}
as we would expect for independent events.

The entropic probability also yields an elegant expression for the
entropy $\entropy(E)$ of a uniform eidostate $E$.  
\begin{theorem}  \label{theorem:shannonentropy}
	Suppose $E$ is a uniform eidostate, and $P(a|E)$ is 
	the entropic probability for state $a \in E$.  Then
	
	\begin{equation}
		\entropy(E) = \biggl \langle \entropy(a) \biggr \rangle + H(\vec{P}),
	\end{equation}
	where $\langle \entropy(a) \rangle$ is the average state entropy in $E$
	and $H(\vec{P})$ is the Shannon entropy of the $P(a|E)$ distribution.
\end{theorem}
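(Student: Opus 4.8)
The plan is to unwind the definition of the Shannon entropy $H(\vec{P})$ and substitute the explicit form of the entropic probability from Equation~(\ref{eq:entropicprob}). Writing the distribution $\vec{P}$ as the collection $\{ P(a|E) : a \in E \}$, we have $H(\vec{P}) = -\sum_{a \in E} P(a|E) \log P(a|E)$, where the sum may be taken over $a \in E$ alone, since $P(a|E) = 0$ outside $E$ and the usual convention $0 \log 0 = 0$ makes those terms vanish.

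The crucial simplification exploits our fixed choice of base-2 logarithms (for which $\entropy(\bitstate) = 1$). With that convention, the definition $P(a|E) = 2^{\entropy(a) - \entropy(E)}$ gives the clean identity $\log P(a|E) = \entropy(a) - \entropy(E)$ for every $a \in E$. First I would substitute this into the Shannon entropy to obtain
\begin{equation}
	H(\vec{P}) = -\sum_{a \in E} P(a|E) \bigl( \entropy(a) - \entropy(E) \bigr).
\end{equation}

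Next I would split this into two pieces. The first, $-\sum_{a \in E} P(a|E)\,\entropy(a)$, is by definition $-\langle \entropy(a) \rangle$, the negative of the average state entropy. The second, $\entropy(E)\sum_{a \in E} P(a|E)$, collapses to $\entropy(E)$ once we use the normalization $\sum_{a \in E} P(a|E) = 1$; this normalization is itself immediate from Theorem~\ref{theorem:entropyformula}, since $\sum_{a \in E} 2^{\entropy(a)} = 2^{\entropy(E)}$. Collecting the two pieces yields $H(\vec{P}) = \entropy(E) - \langle \entropy(a) \rangle$, which rearranges directly to the claimed identity $\entropy(E) = \langle \entropy(a) \rangle + H(\vec{P})$.

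I do not expect any genuine obstacle: the statement is essentially an algebraic rearrangement of the definitions together with Theorem~\ref{theorem:entropyformula}. The only point requiring care is the consistent use of base-2 logarithms, which is exactly what guarantees that $\log P(a|E)$ collapses to the entropy difference $\entropy(a) - \entropy(E)$ with no spurious multiplicative factor; had we chosen a different base, the two logarithms in the formula (the one defining $\entropy$ and the one in $H$) would have to be reconciled explicitly. It is also worth remarking that the result decomposes $\entropy(E)$ into a ``thermodynamic'' contribution (the mean entropy of the constituent states) and a purely ``informational'' contribution (the Shannon entropy of the entropic distribution), mirroring the split already visible in the statistical-mechanical discussion preceding the theorem.
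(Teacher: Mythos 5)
Your proof is correct and is essentially the same as the paper's: both hinge on the identity $\log P(a|E) = \entropy(a) - \entropy(E)$ together with the normalization of $P(\cdot|E)$, the only difference being that the paper averages the rearranged identity $\entropy(E) = \entropy(a) - \log P(a|E)$ over $P$ while you expand $H(\vec{P})$ directly. Your remark about the base-2 convention is a fair point of care but does not change the argument.
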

\begin{proof}
	We first note that, for any $a \in E$, 
	$\log P(a|E) = \entropy(a) - \entropy(E)$.
	We rewrite this as $\entropy(E) = \entropy(a) - \log P(a|E)$
	and take the mean value with respect to the $P(a|E)$
	probabilities:
	\begin{eqnarray}
		\entropy(E) 
		& = & \left \langle \entropy(E) \right \rangle \nonumber \\[1ex]
		& = & \sum_{a \in E} P(a|E) \left ( \entropy(a) - \log P(a|E) \right ) \nonumber \\
		& = & \sum_{a \in E} P(a|E) \entropy (a) - \sum_{a \in E} P(a|E) \log P(a|E),
	\end{eqnarray}
	Therefore, $\entropy(E) = \left \langle \entropy(a) \right \rangle + H(\vec{P})$,
	as desired.
\end{proof}
We previously said that the list of possible states in 
an eidostate represents a kind of information.  
Theorem~\ref{theorem:shannonentropy} puts this intuition on
a quantitative footing.  The entropy of a uniform eidostate
$E$ can be decomposed into two parts:  the average entropy
of the states, and an additional term representing the 
information contained in the distinction among the possible
states.  For a singleton state, the entropy is all of the first
sort.  For a pure information state $I \in \infoset$, it is all
of the second.

In fact, the decomposition itself uniquely picks out the entropic
probability assignment.  Suppose $P(a|E)$ is the entropic
probability of $a$ given $E$, and $P'(a|E)$ is some other probability
distribution over states in $E$.  By Gibbs's inequality \cite{MacKay2003},

\begin{equation}
	0 \leq \sum_{a \in E} P'(a|E) \log \left ( \frac{P'(a|E)}{P(a|E)} \right )
\end{equation}
with equality if and only if $P'(a|E) = P(a|E)$ for all $a \in E$.
We find that
\begin{eqnarray}
	0 & \leq &
	\sum_{a \in E} P'(a|E) \log \left ( \frac{P'(a|E) \, 2^{\entropy(E)}}{2^{\entropy(a)}} \right ) \nonumber \\[1ex]
	& = & \sum_{a \in E} P'(a|E) \log P'(a|E) \,\, + \,\,  \entropy(E) \,\, - \,\, \sum_{a \in E} P'(a|E) \entropy(a)
\end{eqnarray}
and so

\begin{equation}
	\entropy(E) \geq \biggl \langle \entropy(a) \biggr \rangle_{\!\! P'} + H(\vec{P}') ,
\end{equation}
with equality if and only if the $P'$ distribution is
the entropic one.

An agent that employs entropic probabilities will regard the
entropy of a uniform eidostate $E$ as the sum of two parts,
one the average entropy of the possible states and the other the
Shannon entropy of the distribution.  For an agent that employs
some other probability distribution, things will not be so simple.
Besides the average state entropy and the Shannon
entropy of the distribution, $\entropy(E)$ will include an
extra, otherwise unexplained term.  Thus, for a given collection
of eidostates connected by the $\rightarrow$ relation, the entropic
probability provides a uniquely simple account of
the entropy of any uniform eidostate.

It is in this sense we say that the entropic probability ``emerges'' from the
entropy function $\entropy$, just as that function itself emerges from the
$\rightarrow$ relation among eidostates.  

Some further remarks about probability are in order.
Every formal basis for probability emphasizes a distinct idea about it.
In Kolmogorov's axioms \cite{Kolmogorov1933}, probability is simply
a measure on a sample space.  High-measure subsets are more
probable.  In the Bayesian approach of Cox
\cite{Cox1961}, probability is a rational measure of confidence in 
a proposition.  Propositions in which a rational agent is more
confident are also more probable.
Laplace's early discussion \cite{Laplace1820} is
based on symmetry.  Symmetrically equivalent events---two
different orderings of a shuffled deck of cards, for instance---are
equally probable.  (Zurek \cite{Zurek2005} has used a similar
principle of ``envariance'' to discuss the origin of quantum probabilities.)
In algorithmic information theory \cite{CoverThomas}, the algorithmic 
probability of a bit string is related to its complexity.  Simpler bit strings 
are more probable.

In a similar way, entropic probabilities express facts about
state transformations.  In a uniform eidostate $E$, any two
states $a,b \in E$ are related by a possible process.  
If $a \rightarrow b$, then the output state is at least as
probable as the input state:  $P(a|E) \leq P(b|E)$.

\section{A Model for the Axioms}  \label{sec:model1}

A model for an axiomatic system may serve several purposes.
The existence of a model establishes that the axioms 
are self-consistent.  A model may also demonstrate that
the system can describe an actual realistic physical
situation.  If the axioms have a variety of
interesting models, then the axiomatic theory
is widely applicable.  We may almost say that
the entire significance of an axiomatic system
lies in the range of models for that system.

Terms that are undefined in an abstract axiomatic system
are defined within a model as particular mathematical
structures.  The axioms of the system are provable
properties of those structures.  Therefore, a model for
axiomatic information thermodynamics must include
several elements:
\begin{itemize}
	\item  A set $\stateset$ of states and a 
		collection $\eidoset$ of finite nonempty 
		subsets of $\stateset$ to be designated
		as eidostates.
	\item  A rule for interpreting the combination
		of states ($+$) in $\stateset$.
	\item  A relation $\rightarrow$ on $\eidoset$.
	\item  A designated set $\mechset \subseteq \stateset$
		of mechanical states (which might be empty).
	\item  Proofs of Axioms~\ref{axiom:eidostates}--\ref{axiom:equivalence}
		within the model, including the general properties of $\rightarrow$,
		the existence of record states and information states,
		etc.
\end{itemize}

The model will therefore involve specific meanings for $\stateset$, 
$\eidoset$, $+$, $\rightarrow$ and so forth.  It will also yield 
interpretations of derived concepts and results, such as entropy
functions and conserved components of content. In the abstract
		theory, the combination $a+b$ is simply
		the Cartesian pair $(a,b)$.  This definition may suffice
		for the model, or the model may have a 
		different interpretation of $+$.  In any case,
		it must be true in the model that $a+b=a'+b'$
		implies $a=a'$ and $b=b'$.

Our first model for axiomatic thermodynamics
is based on a set $\atomicset$ of ``atomic'' states, from which all states 
in $\stateset$ and all eidostates in $\eidoset$ are constructed.  We 
assign entropies and components of content to these states, which
extend to composite states by additivity.  Let us consider a simple but
non-trivial example that has just one component of content $Q$.
A suitable set $\atomicset$ of atomic states is shown in 
Figure~\ref{fig:atomicstates}.  It includes a special state $r$
(with $\entropy(r) = 0$ and $Q(r) = 0$) and a continuous set of states
$s_{\lambda}$ with $Q(s_{\lambda}) = 1$ and $\entropy(s_{\lambda}) = \lambda$.
The parameter $\lambda$ ranges over the closed interval $[0,1]$.
\begin{figure}
\begin{center}
\includegraphics[width=2in]{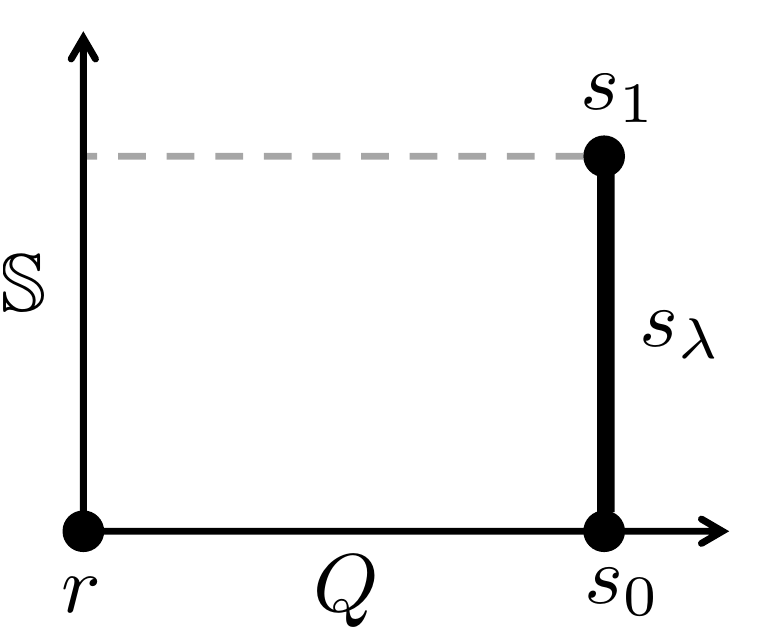}
	\end{center}
	\caption{\label{fig:atomicstates} Atomic states in our
		simple ``macrostate'' model.}
\end{figure}
The set of states $\stateset$ includes everything that can be 
constructed from $\atomicset$ by finite application of the 
pairing operation.  In this way, we can build up $a \in \stateset$
with any non-negative integer value of the component of content
$Q(a)$ and any entropy value $0 \leq \entropy(a) \leq Q(a)$.
Indeed, there will typically be many different ways to create
given $(Q, \entropy)$ values.  To obtain $Q(a) = 2$ 
and $\entropy(a)=1$, for example, 
we might have $a = s_{1}+s_{0}, s_{0}+s_{1}, (s_{1/2}+s_{1/2})+r, \ldots$.

Anticipating somewhat, we call an eidostate \emph{uniform} if all of 
its elements have the same $Q$-value.  We calculate the entropy
of a uniform eidostate by applying Theorem~\ref{theorem:entropyformula}
to it.

Our model allows any finite nonempty set of states
to play the role of an eidostate.  Hence,
we have both uniform and non-uniform eidostates in $\eidoset$.
Each eidostate $A$ has a finite Cartesian factorization

\begin{equation}
	A = F_{A}( E_{1}, \ldots , E_{n}) .
\end{equation}
If $A$ is uniform, then all of its factors are also uniform.  If
none of its factors are uniform, we say that $A$ is
\emph{completely non-uniform}.  More generally, we can
write down an \emph{NU-decomposition} for any eidostate $A$:

\begin{equation}
	A \sim N_{A} + U_{A},
\end{equation}
where $N_{A}$ is a completely non-uniform eidostate
and $U_{A}$ is a uniform eidostate.  Of course, 
if $A$ itself is either completely non-uniform or uniform, 
one or the other of these eidostates may be absent 
from the decomposition.  The NU-decomposition is 
unique up to similarity:  If $N_{A} + U_{A} \sim N_{A}' + U_{A}'$
for completely non-uniform $N$s and uniform $U$s, 
then $N_{A} \sim N_{A}'$ and $U_{A} \sim U_{A}'$.

We can now define the $\rightarrow$ relation on $\eidoset$
in our model.  If $A,B \in \eidoset$, we first write down
NU-decompositions $A \sim N_{A} + U_{A}$ and 
$B \sim N_{B} + U_{B}$.  We  say that $A \rightarrow B$
provided three conditions~hold:
\begin{enumerate}
	\item  Either $N_{A}$ and $N_{B}$ both do not exist,
		or $N_{A} \sim N_{B}$.
	\item  Either $U_{A}$ and $U_{B}$ both do not exist, 
		or only one exists and its $Q$-value is 0, or
		both exist and $Q(U_{A}) = Q(U_{B})$.
	\item  Either $U_{A}$ and $U_{B}$ both do not exist,
		or only $U_{A}$ exists and $\entropy(U_{A}) = 0$,
		or only $U_{B}$ exists and $\entropy(U_{B}) \geq 0$,
		or both exist and $\entropy(U_{A}) \leq \entropy(U_{B})$.
\end{enumerate}
We may call these the $N$-criterion, $Q$-criterion,
and $\entropy$-criterion, and summarize their meaning
as follows:  
$A \rightarrow B$ provided we can transform
$A$ to $B$ by: (a) rearranging the non-uniform factors; 
and (b) transforming the uniform
factors in a way that conserves $Q$ and does not
decrease $\entropy$.

Now, let us examine each of the axioms in turn.
\begin{description}
	\item[Axiom~\ref{axiom:eidostates}]  The basic
		properties of eidostates follow by construction.
	\item[Axiom~\ref{axiom:processes}]  Part (a) holds because
		$A \sim B$ implies that $A$ and $B$ can have the
		same NU-decomposition.  Part (b) holds because
		similarity, equality (for $Q$) and inequality (for $\entropy$)
		are all transitive.  Parts (c) and (d) make use of the 
		general facts that $N_{A+B} \sim N_{A} + N_{B}$ and
		$U_{A+B} \sim U_{A}+U_{B}$.
	\item[Axiom~\ref{axiom:austin}]  If $N_{A} \not \sim N_{B}$, then
		$A \nrightarrow B$.  If $N_{A} \sim N_{B}$, then it must
		be true that $U_{B} \subsetneq U_{A}$, and so 
		$\entropy(U_{A}) > \entropy(U_{B})$.  The 
		$\entropy$-criterion fails, so $A \nrightarrow B$ in this 
		case as well.
	\item[Axiom~\ref{axiom:conditional}]   For Part (a), we note that
		$A$ must be uniform, and so $A' \subseteq A$ is also uniform.
		The statement follows from the $\entropy$-criterion.  Part (b)
		also follows from the $\entropy$-criterion.
	\item[Axiom~\ref{axiom:information}]  The atomic state $r$ with $Q(r)=0$
		and $\entropy(r) = 0$ is a record state, as is $r+r$, etc.  We can
		take our bit state to be $\bitstate = \{ r, r+r \}$.  Since every information state
		$I \in \infoset$ is uniform with $Q(I) = 0$, every information process
		(including $\bitproc = \fproc{r}{\bitstate}$) is possible.
	\item[Axiom~\ref{axiom:demons}]  Since all of the states of the form $a + I$
		are uniform, the statements in this axiom follow from the $\entropy$-criterion.
	\item[Axiom~\ref{axiom:stability}]  Suppose $nA \rightarrow nB + J$.  Since
		$J$ is uniform, it must be that $nN_{A} \sim nN_{B}$, from which it
		follows that $N_{A} \sim N_{B}$.  The $\entropy$-criterion for $U_{A}$
		and $U_{B}$ follows from a typical stability argument---that is, if
		$nx \leq ny + z$ for arbitrarily large values of $n$, then it must
		be true $x \leq y$.
	\item[Axiom~\ref{axiom:mechanical}]  The set $\mechset$ of mechanical
		states may be defined to include all states that can be constructed
		from the zero-entropy atomic state $s_{0}$ (such as 
		$s_{0}+s_{0}$, $s_{0}+(s_{0}+s_{0})$, etc.).  
		The required properties of $\mechset$~follow.
	\item[Axiom~\ref{axiom:equivalence}]  The uniform eidostate $E$ has 
		$Q(E) = q \geq 0$ and $\entropy(E) = \sigma \geq 0$.  
		Choose an integer $n > \sigma$.
		Now, let $e = q s_{0}$ (or $e = r$ if $q = 0$), $x = n s_{0}$,
		and $y = n s_{\lambda}$ where $\lambda = \sigma/n$.
		We find that $Q(e) = q$, $Q(x) = Q(y) = n$, 
		$\entropy(e) = \entropy(x) = 0$ and $\entropy(y) = n (\sigma/n) = \sigma$.
		It follows that $x \rightarrow y$ and $E + x \leftrightarrow e + y$.
\end{description}

This model based on a simple set of atomic states has several 
sophisticated characteristics, including a non-trivial component of content 
$Q$ and possible processes involving non-uniform eidostates.  
It is not difficult to create models of this type that are even richer 
and more complex.  However, it may be objected that this type of model
obscures one of the key features of axiomatic information thermodynamics.
Here, the entropy function $\entropy$ does not \emph{emerge from} the $\rightarrow$ 
relation among eidostates, but instead is imposed by hand to
\emph{define} $\rightarrow$ within the model.  We address this 
deficiency in our next model.

\section{A Simple Quantum Model}  \label{sec:model2}

Now, we present a model for the axioms in which
the entropy function does emerge from the underlying structure.
The model is a simple one without mechanical states or non-trivial
components of content.  Every eidostate is uniform and every
process is possible.  On the other hand, the model is based on
quantum mechanics, and so is not devoid of features of interest.

Consider an agent A that can act upon an external qubit system 
Q having Hilbert space $\mathcal{Q}$.
Based on the information the agent possesses, it may assign the
states $\ket{\psi_{1}}$ or $\ket{\psi_{2}}$ to the qubit.  These
state vectors need not be orthogonal. 
That is, it may be that no measurement of Q can perfectly 
distinguish which of the two states is actually present.  
The states, however, correspond to states of knowledge of agent A,
and the agent is able to perform different operations on Q depending
on whether it judges the qubit to be in one state or the other. Our
notion of information possessed by the agent is thus similar to
Zurek's concept of \emph{actionable information} \cite{Zurek2013}.  Roughly speaking,
information is actionable if it can be used as a control variable for
conditional unitary dynamics.
This means that the two states of the agent's memory 
($\ket{\mu_{1}}$ and $\ket{\mu_{2}}$ in a Hilbert space $\mathcal{A}$) 
must be distinguishable.
Hence, if we include the agent in our description of the entire system, 
the states $\ket{\mu_{1}} \otimes \ket{\psi_{1}}$ and 
$\ket{\mu_{2}} \otimes \ket{\psi_{2}}$ are orthogonal,
even if $\ket{\psi_{1}}$ and $\ket{\psi_{2}}$ are not.

Our model for axiomatic information thermodynamics envisions
a world consisting of an agent A and an unbounded number of
external qubits. Nothing essential in our model would
be altered if the external systems had $\dim \mathcal{Q} = 
d$ ---``qudits'' instead of qubits. The thermodynamic states
of the qubit systems are actually states of knowledge of the agent,
and so we must include the corresponding state of the agent's 
memory in our physical description.  The quantum
state space for our model is of the~form:

\begin{equation}
	\hilbert = \mathcal{A} \otimes \mathcal{Q} \otimes \mathcal{Q} \otimes \cdots
\end{equation}
To be a bit more rigorous, we restrict $\hilbert$ to vectors of
the form $\ket{\Psi} \otimes \ket{0} \otimes \ket{0} \otimes \cdots$,
where $\ket{\Psi} \in \mathcal{A} \otimes \mathcal{Q}^{\otimes n}$
for some finite $n$, and $\ket{0}$ is a designated ``zero'' ket in $\mathcal{Q}$.
Physical states in $\hilbert$ have $\amp{\Psi}{\Psi} = 1$.
(The space $\hilbert$ is not quite a Hilbert space, since it is 
not topologically complete, but this mathematical nicety will not
affect our discussion.)

Since the Qs are qubits, $\dim \mathcal{Q} = 2$.  
The agent space
$\mathcal{A}$, however, must be infinite-dimensional, so 
that it contains a countably infinite set of 
orthogonal quantum states.  These are to be identified 
as distinct records of the agent's memory.

In our thermodynamic model, the elements of $\stateset$
(the thermodynamic ``states'') are projection operators
on $\hilbert$.  For any $a \in \stateset$, we have
a projection on $\hilbert$ of the form
\begin{equation}
	\oper{\Pi}_{a} = \proj{a} \otimes \oper{\pi}_{a} \otimes \proj{0} \otimes \cdots 
\end{equation}
where $\ket{a}$ is an agent state in $\mathcal{A}$ and $\oper{\pi}_{a}$ is a
non-null projection in $\mathcal{Q}^{\otimes n}$ for some finite $n \geq 1$.
The value of $n$ is determined by a specified integer function $L(a)$, which
we call the \emph{length} of the state $a$. 
Heuristically, the thermodynamic state $a$ means that the state
of the world lies in the subspace $\mathcal{S}_{a}$ onto which
$\oper{\Pi}_{a}$ projects.  The agent's memory is in the state 
$\ket{a}$ and the the quantum state of the
first $L(a)$ external qubits lies somewhere in the subspace 
onto which $\oper{\pi}_{a}$ projects.  
(All of the subsequent qubits are in the state $\ket{0}$.)

Two distinct thermodynamic states correspond to
orthogonal states of the agent's memory.  If $a,b \in \stateset$
with $a \neq b$, then $\amp{a}{b} = 0$.  The projections
$\oper{\Pi}_{a}$ and $\oper{\Pi}_{b}$ are orthogonal to 
each other (so that $\oper{\Pi}_{a} \oper{\Pi}_{b} = 0$).
However, it need not be the case that $\oper{\pi}_{a}$
and $\oper{\pi}_{b}$ are orthogonal.

Given $a$, the projection $\oper{\Pi}_{a}$ projects onto 
the subspace $\mathcal{S}_{a}$  The dimension of this
subspace is $d_{a} = \dim \mathcal{S}_{a} = \tr \oper{\Pi}_{a}
= \tr \oper{\pi}_{a}$.   Note that $d_{a} \leq 2^{L(a)}$.
We will assume that there are $\mathcal{S}_{a}$ subspaces
of every finite dimension:  For any integer $n \geq 1$, there
exists $a \in \stateset$ with $d_{a} = n$.

Suppose $a,b \in \stateset$ correspond to 
$\oper{\Pi}_{a} = \proj{a} \otimes \oper{\pi}_{a} \otimes \cdots$
and $\oper{\Pi}_{b} = \proj{b} \otimes \oper{\pi}_{b} \otimes \cdots$.
Then, we will specify that the combined state $a + b$ corresponds to 
\begin{equation}
	\oper{\Pi}_{a+b} = \proj{a+b} \otimes \oper{\pi}_{a}
		\otimes \oper{\pi}_{b} \otimes \cdots \,\,.
\end{equation}
Since the state $a+b$ entails a distinct state of the agent's
knowledge, the agent state vector $\ket{a+b}$ is orthogonal 
to both $\ket{a}$ and $\ket{b}$.
We also note that $L(a+b) = L(a)+L(b)$.

Here is a clarifying example.  Suppose $a,b,c \in \stateset$.  Then,
\begin{eqnarray}
	\oper{\Pi}_{(a+b)+c} & = & \proj{(a+b)+c} \otimes
		\oper{\pi}_{a} \otimes \oper{\pi}_{b} \otimes \oper{\pi}_{c} \otimes \cdots \\
	\oper{\Pi}_{a+(b+c)} & = & \proj{a+(b+c)} \otimes
		\oper{\pi}_{a} \otimes \oper{\pi}_{b} \otimes \oper{\pi}_{c} \otimes \cdots 
\end{eqnarray}
are distinct thermodynamic states and hence orthogonal projections in $\hilbert$,
even though they correspond to exactly the same qubit states.  
The difference between $(a+b)+c$ and $a+(b+c)$ entirely lies 
in the distinct representations of the states in the agent's memory.

The eidostates in our model are the finite nonempty collections
of states in $\stateset$.  We can associate each eidostate with a 
projection operator as well.  Let $E = \{ a, \ldots \}$ be an
eidostate.  We define

\begin{equation}
	\oper{\Pi}_{E} = \sum_{a \in E} \oper{\Pi}_{a} 
		= \sum_{a \in E} \proj{a} \otimes \oper{\pi}_{a} \otimes \cdots .
\end{equation}
This is a projection operator because the $\oper{\Pi}_{a}$
projections are orthogonal to one another.  $\oper{\Pi}_{E}$ projects onto
a subspace $\mathcal{S}_{E}$, which is the linear span
of the collection of subspaces $\{ \mathcal{S}_{a}, \ldots \}$.

Interestingly, this subspace $\mathcal{S}_{E}$ might contain
quantum states in which the agent A is entangled with one
or more external qubits.  Suppose $a, b \in \stateset$ are associated
with single-qubit projections onto distinct states $\ket{\psi_{a}}$ and $\ket{\psi_{b}}$.
The eidostate $E = \{ a, b \}$ is associated with the projection

\begin{equation}
	\oper{\Pi}_{E} = \left ( \proj{a} \otimes \proj{\psi_{a}} + \proj{b} \otimes \proj{\psi_{b}} \right )
		\otimes \cdots ,
\end{equation}
which projects onto a subspace $\mathcal{S}_{E}$ that contains the
quantum state

\begin{equation}
	\ket{\Psi} = \frac{1}{\sqrt{2}} \left ( \ket{a} \otimes \ket{\psi_{a}} + \ket{b} \otimes \ket{\psi_{b}} \right )
		\otimes \cdots .
\end{equation}
In this state, the agent does not have a definite memory record
state.  However, if a measurement is performed on the agent
(perhaps by asking it a question), then the resulting memory record
$a$ or $b$ would certainly be found to be consistent with the state 
of the qubit system, $\ket{\psi_{a}}$ or $\ket{\psi_{b}}$.

Suppose we combine two eidostates $A = \{ a, \ldots \}$ and $B = \{ b, \ldots \}$.
Then, the quantum state lies in a subspace of dimension
\begin{eqnarray}
	d_{A+B} = \tr \oper{\Pi}_{A+B} 
	 & = & \tr \sum_{a,b} \oper{\Pi}_{a+b} \nonumber \\
	 & = & \sum_{a,b} \tr \left ( \proj{a+b} \otimes \oper{\pi}_{a} \otimes \oper{\pi}_{b} \right ) \nonumber \\
	 & = & \sum_{a,b} \left ( \tr \oper{\pi}_{a} \right ) \left ( \tr \oper{\pi}_{b} \right ) \nonumber \\
	 & = & \left ( \sum_{a} \tr \oper{\pi}_{a} \right ) \left ( \sum_{b} \tr \oper{\pi}_{b} \right ) \nonumber \\
	 & = & d_{A} \cdot d_{B} .
\end{eqnarray}
When eidostates combine, subspace dimension is multiplicative.

It remains to define the $\rightarrow$ relation in our quantum model.
We say that $A \rightarrow B$ if there exists a unitary time evolution
operator $\oper{U}$ on $\hilbert$ such that $\ket{\Psi} \in \mathcal{S}_{A}$
implies that $\oper{U} \ket{\Psi} \in \mathcal{S}_{B}$.  That is, every
quantum state consistent with $A$ evolves to one consistent with $B$
under the time evolution $\oper{U}$.  (Note that the evolution includes
a suitable updating of the agent's own memory state.)  
This requirement is easily expressed as a subspace dimension 
criterion:  $A \rightarrow B$ if and only if $d_{A} \leq d_{B}$.

We are now ready to verify our axioms.
\begin{description}
	\item[Axiom~\ref{axiom:eidostates}]  This follows from our 
		construction of the eidostates $\eidoset$.
	\item[Axiom~\ref{axiom:processes}]  All of these basic properties
		of the $\rightarrow$ relation follow from the subspace
		dimension criterion.
	\item[Axiom~\ref{axiom:austin}]  If $B \subsetneq A$, then
		$d_{A} > d_{B}$, and so $A \nrightarrow B$.
	\item[Axiom~\ref{axiom:conditional}]  Again, both parts of this axiom
		follow from the subspace dimension criterion.  If eidostate $A$
		is a disjoint union of eidostates $A_{1}$ and $A_{2}$, then
		$d_{A} = d_{A_{1}} + d_{A_{2}}$.
	\item[Axiom~\ref{axiom:information}]  Any state $r$ with $d_{r} = 1$
		functions as a record state.  We have assumed that such a 
		state exists.  We can take $\bitstate = \{ r, r+r \}$.  The bit
		process $\bitproc = \fproc{r}{\bitstate}$ is natural ($r \rightarrow \bitstate$)
		by the subspace dimension criterion.  Notice that, for any information state
		$I$, $d_{I} = \numberin{I}$.
	\item[Axiom~\ref{axiom:demons}]  For any $b \in \stateset$ and $I \in \infoset$,
		we have $d_{b+I} = d_{b} \cdot \numberin{I}$.  For Part (a), we can always
		find a large enough information state so that $d_{b} \leq d_{a} \cdot \numberin{I}$.
		For Part (b), either $d_{a} \leq d_{b + I}$ or $d_{b+I} \leq d_{a}$.
	\item[Axiom~\ref{axiom:stability}]  If $(d_{A})^{n} \leq (d_{B})^{n} \cdot \numberin{J}$ for
		arbitrarily large values of $n$, then $d_{A} \leq d_{B}$.
	\item[Axiom~\ref{axiom:mechanical}]  It is consistent to take $\mechset = \emptyset$.
	\item[Axiom~\ref{axiom:equivalence}]  All of our eidostates are uniform.  For any eidostate
		$E$, we can choose $e$ so that $d_{e} = d_{E}$.  (Recall that we have assumed
		states with every positive subspace dimension.)  If we chose $x=y$ to be any
		state, then $E + x \leftrightarrow e + y$.
\end{description}

Our quantum mechanical model is therefore a model of the
axioms of information thermodynamics.  It is a relatively
simple model, of course, having no non-trivial conserved 
components of content and no mechanical states.

In the quantum model, the entropy of any eidostate is simply 
the logarithm of the dimension of the corresponding 
subspace:  $\entropy(E) = \log d_{E}$.  This is the
von Neumann entropy of a uniform density operator
$\oper{\rho}_{E} = \frac{1}{d_{E}} \oper{\Pi}_{E}$.
We can, in fact, recast our entire discussion
in terms of these mixed states, and this approach
does yield some insights.  For example, we find that 
the density operator $\oper{\rho}_{E}$ for eidostate $E$ 
is a mixture of the density operators for its constituent states:

\begin{equation}
	\oper{\rho}_{E} = \sum_{a \in E} P(a|E) \oper{\rho}_{a} ,
\end{equation}
where $P(a|E)$ is the entropic probability

\begin{equation}
	P(a|E) = \frac{2^{\entropy(a)}}{2^{\entropy(E)}} = \frac{d_{a}}{d_{E}} .
\end{equation}

There are, of course, many quantum states of the agent and its
qubit world that do not lie within any eidostate subspace 
$\mathcal{S}_{E}$.  For example, consider a state 
associated with a pure state projection 
$\oper{\Pi}_{a} = \proj{a} \otimes \proj{\psi_{a}} \otimes \cdots$.
Let $\ket{\chi}$ be a state orthogonal to $\ket{\psi_{a}}$.
Then, $\ket{a} \otimes \ket{\chi} \otimes \cdots$ is a
perfectly legitimate quantum state that is orthogonal to 
$\mathcal{S}_{a}$ and every other eidostate subspace.
This state represents a situation in which the agent's 
memory record indicates that the first $L(a)$ external qubits 
are in state $\ket{\psi_{a}}$, but \emph{the agent is wrong.}

The exclusion of such physically possible but incongruous
quantum states tells us something significant about our
theory of axiomatic information thermodynamics.  
The set $\eidoset$ does not necessarily
include all possible physical situations; the arrow relations
$\rightarrow$ between eidostates do not necessarily represent
all possible time evolutions.  Our axiomatic system is
simply a theory of what transformations are possible
among a collection of allowable states.  In this, it is
similar to ordinary classical thermodynamics, which is
designed to consider processes that begin and end 
with states in internal thermodynamic equilibrium.

\section{Remarks}  \label{sec:remarks}
The emergence of the entropy $\entropy$, a state function 
that determines the irreversibility of processes, is a key benchmark 
for any axiomatic system of thermodynamics.  
Our axiomatic system does not yield a unique entropy
on $\stateset$, since it is based on the extension
of an irreversibility function to impossible processes.  
However, many of our results
and formulas for entropy are uniquely determined by our 
axioms.  The entropy measure for information states, 
the Hartley--Shannon
entropy $\log \numberin{I}$, is unique up to
the choice of logarithm base.  
This in turn uniquely determines the irreversibility function
on possible singleton processes, since this is defined in terms of
the creation and erasure of bit states.
There is a unique relationship 
between the entropy of a uniform eidostate and the entropy 
of the possible states it contains.  Finally, the entropic
probability distribution on a uniform eidostate, which might appear
at first to depend on the singleton state entropy, is
nonetheless unique. 

It remains to be seen how the axiomatic system 
developed here for state transformations is related to the axiomatic system, 
similar in some respects, given by Knuth and Skilling for considering 
problems of inference~\cite{KnuthSkilling2012}.  There, symmetry axioms 
in a lattice of states give rise to probability and entropy measures.  
In a similar way, the "entropy first, probability after" idea presented here
is reminiscent of Caticha's "entropic inference"~\cite{Caticha2011}, in which 
the probabilistic Bayes rule is derived (along with the maximum entropy 
method) from a single relative entropy functional.

The entropy in information thermodynamics has two aspects: 
a measure of the information in a pure information state and 
an irreversibility measure for a singleton eidostate (the 
kind most closely analogous to a conventional thermodynamic state).
The most general expression for the entropy of a
uniform eidostate in Theorem~\ref{theorem:shannonentropy}
exhibits this twofold character.
However, the two aspects of entropy are not really distinct
in our axiomatic system. 
Both are based on the structure of the
$\rightarrow$ relation among eidostates,
which tells how one eidostate may be transformed into another 
by processes that may include demons.

The connection between information and thermodynamic
entropy is nowhere more clearly stated than in Landauer's
principle, the minimum thermodynamic cost of information erasure.
It is easy to state a theorem of our system corresponding
to Landauer's principle.  Suppose $a,b \in \stateset$ and
$\bitstate$ is a bit state.  If $a + \bitstate \rightarrow b$, then
it follows that $\entropy(b) \geq \entropy(a) + 1$.  Erasing
a bit state is necessarily accompanied by an increase of 
at least one unit in the thermodynamic entropy.  More 
generally, suppose $A$ and $B$ are uniform eidostates
with $A \rightarrow B$.  If we use Theorem~\ref{theorem:shannonentropy}
to write the entropies of the two states as

\begin{equation}
	\entropy(A) = \langle \entropy \rangle_{A} + H_{A}
	\qquad \mbox{and} \qquad
	\entropy(B) = \langle \entropy \rangle_{B} + H_{B},
\end{equation}
then it follows that $\Delta \langle \entropy \rangle \geq - \Delta H$
for the process taking $A$ to $B$.  In other words, any decrease
of the ``Shannon information'' part of the eidostate entropy 
($- \Delta H$) must be accompanied by an increase, 
on average, in the thermodynamic entropy of the possible states
($\Delta \langle \entropy \rangle$).

These theorems do not really constitute
a ``proof'' of Landauer's principle, because they
depend upon the physical applicability of our axioms.
On the other hand, our axiomatic framework does
obviate some of the objections that have been raised
to existing derivations of Landauer's principle.
Norton \cite{Norton2005}, for example, has argued that many ``proofs''
of the principle improperly mix together two different
kinds of probability distribution, the microstate 
distributions associated with thermodynamic 
equilibrium states and the probability assignments
to memory records that represent information.
He calls this the problem of ``illicit ensembles''.
Our approach, by contrast, is not based on concepts of
probability, except for those that emerge naturally
from the structure of the $\rightarrow$ relation.
We do not resolve thermodynamic states into their
microstates at all, and we represent information by 
a simple enumeration of possible memory records.

The Second Law of Thermodynamics, the law of non-decrease
of entropy, is the canonical example of the ``arrow of time''
in physics.  Time asymmetry in our axiomatic theory is
found in the direction of the $\rightarrow$ relation.  It is
an enlightening exercise to consider the axioms of information
thermodynamics with the arrows reversed.  Some axioms
(e.g., Axioms~\ref{axiom:eidostates} and \ref{axiom:processes})
are actually unchanged by this.  Others may be modified but
still remain true statements in the theory.  A few become
false.  The most striking example of the last is Axiom~\ref{axiom:austin},
which states that no eidostate may be transformed into a 
proper subset of itself.  That is, no process can deterministically
delete one of a list of possible states.  This is the principle that leads to 
irreversbility in information processes, and through them,
to more general irreversibility and entropy measures.

\bigskip

The authors are pleased to acknowledge many useful
conversations with, and suggestions from, Rob Spekkens,
Jonathan Oppenheim, David Jennings, Charles Bennett, 
John Norton, Graham Reid, Avery Tishue, Ian George, Aixin Li, 
and Tiancheng Zheng.  AH was supported
by the Kenyon Summer Science Scholar program in the 
summer of 2016.  BWS and MDW gratefully acknowledge
funding from the Foundational Questions Institute (FQXi),
via grant FQXi-RFP-1517.

\section*{Appendix}
This appendix provides a convenient summary of 
axioms of our theory.  

First, we remind ourselves of a 
few essential definitions:

An \textbf{\emph{eidostate}} is a set whose elements are called 
\textbf{\emph{states}}.
The collection of eidostates is $\eidoset$ and the
collection of states is $\stateset$.  An element $a \in \stateset$
may be identified with the singleton eidostate $\{ a \} \in \eidoset$.

Eidostates are combined by the Cartesian product, which we
denote by the symbol $+$.  When we combine an eidostate with itself $n$ times,
we use $nA$ to denote $A + (A + (A + \ldots))$.  
Two eidostates are \textbf{\emph{similar}}
(written $A \sim B$) if they are made up of the same 
Cartesian factors, perhaps combined in a different way.

There is a relation $\rightarrow$ on $\eidoset$, and thus also
on the singletons in $\stateset$.
An eidostate $A$ is \textbf{\emph{uniform}} if, for all $a,b \in A$, 
either $a \rightarrow b$ or $b \rightarrow a$.
A formal \textbf{\emph{process}} is a pair of eidostates $\fproc{A}{B}$.  We say
that a process $\fproc{A}{B}$ is \textbf{\emph{possible}} if either $A \rightarrow B$
or $B \rightarrow A$.  

A \textbf{\emph{record state}} $r$ is a state for which there exists another state
$a$ such that $a \rightarrow a+r$ and $a+r \rightarrow a$ (denoted
$a \leftrightarrow a+r$).  An \textbf{\emph{information state}} is an eidostate
containing only record states, and the set of information states is
called $\infoset$.  A \textbf{\emph{bit state}} $\bitstate$ is an information state 
containing exactly two distinct record states.
A \textbf{\emph{bit process}} is a formal process
$\fproc{r}{\bitstate}$, where $r$ is a record state and $\bitstate$
is a bit state.

Now, we may state our axioms:

\begin{description}
\item[Axiom~\ref{axiom:eidostates}] (Eidostates.)
	$\eidoset$ is a collection of sets called \emph{eidostates} such that:
	\begin{description}
		\item[(a)]  Every $A \in \eidoset$ is a finite nonempty set
			with a finite prime Cartesian factorization.
		\item[(b)]  $A + B \in \eidoset$ if and only if $A,B \in \eidoset$.
		\item[(c)]  Every nonempty subset of an eidostate is also an eidostate.
	\end{description}
\item[Axiom~\ref{axiom:processes}] (Processes.)
Let eidostates $A,B,C \in \eidoset$, and $s \in \stateset$.
\begin{description}
	\item[(a)]  If $A \sim B$, then $A \rightarrow B$.
	\item[(b)]  If $A \rightarrow B$ and $B \rightarrow C$, then $A \rightarrow C$.
	\item[(c)]  If $A \rightarrow B$, then $A + C \rightarrow B + C$.
	\item[(d)]  If $A + s \rightarrow B + s$, then $A \rightarrow B$.
\end{description}
\item[Axiom~\ref{axiom:austin}]
	If $A,B \in \eidoset$ and $B$ is a proper subset of $A$, then $A \nrightarrow B$.
\item[Axiom~\ref{axiom:conditional}] (Conditional processes.)
	\begin{description}
		\item[(a)]  Suppose $A, A' \in \eidoset$ and $b \in \stateset$.  If $A \rightarrow b$
			and $A' \subseteq A$ then $A' \rightarrow b$.
		\item[(b)]  Suppose $A$ and $B$ are uniform eidostates that are each disjoint
			unions of eidostates: $A = A_{1} \cup A_{2}$ and $B = B_{1} \cup B_{2}$.
			If $A_{1} \rightarrow B_{1}$ and $A_{2} \rightarrow B_{2}$ then
			$A \rightarrow B$.
	\end{description}
\item[Axiom~\ref{axiom:information}]  (Information.)
	There exist a bit state and a possible bit process.
\item[Axiom~\ref{axiom:demons}]  (Demons.)
	Suppose $a,b \in \stateset$ and $J \in \infoset$ such that $a \rightarrow b + J$.
	\begin{description}
		\item[(a)]  There exists $I \in \infoset$ such that $b \rightarrow a+I$.
		\item[(b)]  For any $I \in \infoset$, either $a \rightarrow b + I$ or $b+I \rightarrow a$.
	\end{description} 
\item[Axiom~\ref{axiom:stability}]  (Stability.)
	Suppose $A,B \in \eidoset$ and $J \in \infoset$.  If $nA \rightarrow nB + J$ for
	arbitrarily large values of $n$, then $A \rightarrow B$.
\item[Axiom~\ref{axiom:mechanical}]  (Mechanical states.)
	There exists a subset $\mechset \subseteq \stateset$ of 
	\emph{mechanical states} such that:
	\begin{description}
		\item[(a)]  If $l,m \in \mechset$, then $l+m \in \mechset$.
		\item[(b)]  For $l,m \in \mechset$, if $l \rightarrow m$
			then $m \rightarrow l$.
	\end{description}	
\item[Axiom~\ref{axiom:equivalence}]  (State equivalence.)
	If $E$ is a uniform eidostate then there exist states $e,x,y \in \stateset$
	such that $x \rightarrow y$ and $E + x \leftrightarrow e + y$.
\end{description}

\bibliographystyle{unsrt}
\bibliography{quantum}

\end{document}